\newcommand{\argmin}{\mathop{\rm argmin}}
\newcommand{\1}{\mbox{1}\hspace{-0.25em}\mbox{l}}
\newcommand{\Disagree}{\mathsf{Disagree}}
\newtheorem{theorem}{Theorem}
\newtheorem{corollary}{Corollary}
\newtheorem{lemma}{Lemma}
\newtheorem{problem}{Problem}
\begin{document}

\twocolumn[

\aistatstitle{Multilayer Correlation Clustering}

\aistatsauthor{ Atsushi Miyauchi \And Florian Adriaens \And Francesco Bonchi \And Nikolaj Tatti}

\aistatsaddress{ Intesa Sanpaolo \And University of Helsinki \And Intesa Sanpaolo AI Research \And University of Helsinki} ]

\begin{abstract} 
We establish Multilayer Correlation Clustering, a novel generalization of Correlation Clustering to the multilayer setting. 
In this model, we are given a series of inputs of Correlation Clustering (called layers) over the common set $V$ of $n$ elements. 
The goal is to find a clustering of $V$ that minimizes the $\ell_p$-norm ($p\geq 1$) of the multilayer-disagreements vector, 
which is defined as the vector (with dimension equal to the number of layers), 
each element of which represents the disagreements of the clustering on the corresponding layer. 
For this generalization, we first design an $O(L\log n)$-approximation algorithm, where $L$ is the number of layers. 
We then study an important special case of our problem, namely the problem with the so-called probability constraint. 
For this case, we first give an $(\alpha+2)$-approximation algorithm, where $\alpha$ is any possible approximation ratio 
for the single-layer counterpart. 
Furthermore, we design a $4$-approximation algorithm, which improves the above approximation ratio of $\alpha+2=4.5$ for the general probability-constraint case. 
Computational experiments using real-world datasets support our theoretical findings and demonstrate the practical effectiveness of our proposed algorithms. 
\end{abstract}

\section{INTRODUCTION}\label{sec:intro}

Clustering objects based on the information of their similarity is a fundamental task in machine learning.
\emph{Correlation Clustering} \citep{Bansal+02,Bansal+04} 
is an optimization model that mathematically formulates this task.
In the model, we are given a set $V$ of $n$ elements,
where each pair of elements is labeled either `$+$' (representing that they are similar) or `$-$' (representing that they are dissimilar)
together with a nonnegative weight representing the degree of similarity/dissimilarity.
In general, the goal of Correlation Clustering is to find a clustering of $V$ that is the most consistent with the given similarity information.
The (in)consistency of a clustering of $V$ can be measured by the so-called \emph{disagreements}, defined as the sum of weights of misclassified pairs,
i.e., pairs with `$+$' label across clusters and pairs with `$-$' label within the same cluster.
The problem of finding a clustering of $V$ that minimizes the disagreements is called \textsc{MinDisagree}.

It is known that \textsc{MinDisagree} is not only NP-hard~\citep{Bansal+02} but also APX-hard
even if we consider the unweighted case (i.e., the case where the weights are all equal to $1$)~\citep{Charikar+05}.
A large body of work has been devoted to designing polynomial-time approximation algorithms for the problem.
For the general weighted case, \citet{Charikar+05} and \citet{Demaine+06} independently proposed $O(\log n)$-approximation algorithms,
using the well-known \emph{region growing} technique~\citep{Garg+96}.
The approximation ratio of $O(\log n)$ is still the state-of-the-art,
and it is also known that improving it is at least as hard as improving the $O(\log n)$-approximation for Minimum Multicut~\citep{Garg+96},
which is one of the major open problems in theoretical computer science.
For the unweighted case, \citet{Bansal+02} presented the first constant-factor approximation algorithm,
which has been improved by a series of works so far~\citep{Ailon+08,Cao+24,Charikar+05,Chawla+15,Cohen-Addad+22,Cohen-Addad+23}.
Notably, the current-best approximation ratio for the unweighted case is $1.437+\epsilon$ for any $\epsilon >0$~\citep{Cao+24}.
For more details, see Section~\ref{sec:related}.

\subsection{Our Contribution}
We establish \emph{Multilayer Correlation Clustering}, 
a novel generalization of Correlation Clustering to the multilayer setting.
In this model, we are given a series of inputs of Correlation Clustering (called \emph{layers}) over the common set $V$ of $n$ elements.
The goal is then to find a clustering of $V$ that is consistent as much as possible with \emph{all} layers.
To quantify the (in)consistency of a clustering over layers,
we introduce the concept of \emph{multilayer-disagreements vector} (with dimension equal to the number of layers) of a clustering,
each element of which represents the disagreements of the clustering on the corresponding layer.\footnote{Note that there is an existing concept called \emph{disagreements vector} in the literature of Correlation Clustering with \emph{fairness} consideration~\citep{Kalhan+19}. However, our multilayer-disagreements vector is a different concept from it. For details, see Section~\ref{sec:related} and Appendix~\ref{appdx:related}.}
Using the $\ell_p$-norm ($p\geq 1$) of this vector,
we can quantify the (in)consistency of the given clustering over the layers in a variety of regimes.
In particular, if we set $p=1$, it simply quantifies the sum of disagreements over all layers,
whereas if we set $p=\infty$, it quantifies the maximal disagreements over the layers.
For $p\geq 1$, our problem asks to find a clustering of $V$ that minimizes the $\ell_p$-norm of the multilayer-disagreements vector. 

Multilayer Correlation Clustering is motivated by real-world scenarios. 
Suppose that we want to find a clustering of users of $\mathbb{X}$ using their similarity information. 
In this case, various types of similarity can be defined through analysis of users' tweets 
and different types of connections among users such as follower relations, retweets, and mentions. 
In the original Correlation Clustering, we need to deal with that information one by one 
and manage to aggregate resulting clusterings. 
On the other hand, Multilayer Correlation Clustering enables us to handle that information simultaneously, 
directly producing a clustering that is consistent (as much as possible) with all types of information. 
As another example scenario, suppose that we analyze brain networks, 
where nodes correspond to small regions of a brain and edges represent similarity relations among them. 
Then it is often the case that the edge set is not determined uniquely; 
indeed, there would be at least two types of similarity 
based on the structural connectivity and the functional connectivity among the small pieces of a brain. 
Obviously, Multilayer Correlation Clustering can again find an advantage in this context. 

For this novel, well-motivated generalization, we present a variety of algorithmic results.
We first design a polynomial-time $O(L\log n)$-approximation algorithm, where $L$ is the number of layers.
Our algorithm is a generalization of the $O(\log n)$-approximation algorithms for \textsc{MinDisagree}~\citep{Charikar+05,Demaine+06}
and thus employs the region growing technique~\citep{Garg+96}.
Our algorithm first solves a convex programming relaxation of the problem. 
Then, the algorithm iteratively constructs a cluster (and removes it from $V$ as a part of the output),
using the region growing technique based on the pseudometric computed by the relaxation, until all elements are clustered. 
Specifically, in each iteration, the algorithm takes an arbitrary element in $V$ 
and constructs a ball of center being that element and a radius carefully selected using the similarity information over all layers. 

We then consider an important special case of our problem, namely the problem with the \emph{probability constraint}.
In this setting, on each layer, each pair of elements in $V$ has both `$+$' and `$-$' labels, each with a nonnegative weight in $[0,1]$, and the two weights sum to $1$.
Although this is called a \emph{constraint} in the literature, it is not a restriction on feasible solutions, but rather a \emph{property} of problem instances.
This setting is still quite natural and general: it arises, for example, when each pair is associated with a similarity score $s \in [0,1]$, with the corresponding dissimilarity score given by $1-s$.
The value $s$ may also be interpreted as the confidence, or probability, that the pair is similar under uncertainty.
Under \textsc{MinDisagree}, placing the pair in the same cluster contributes $1-s$ to disagreement, whereas separating it contributes $s$, which yields a natural clustering objective.
\textsc{MinDisagree} with the probability constraint contains the unweighted \textsc{MinDisagree} as a special case, and has been extensively studied in the literature~\citep{Ailon+08,Bansal+02,Bansal+04,Chawla+15,Kuroki+24,Puleo+15,vanZuylen+09}.

For this case, we first give a polynomial-time $(\alpha+2)$-approximation algorithm,
where $\alpha$ is any possible approximation ratio for \textsc{MinDisagree} with the probability constraint 
or any of its special cases if we consider the corresponding special case of our problem.
For instance, we can take $\alpha=2.5$ in general~\citep{Ailon+08},
$\alpha=1.437+\epsilon$ for the unweighted case~\citep{Cao+24},
and $\alpha=1.5$ for the case where the weights of `$-$' labels satisfy the triangle inequality constraint (see Section~\ref{sec:problem})~\citep{Chawla+15}.
In the algorithm design, we first reduce our problem to a novel optimization problem in a metric space, and devise an algorithm to solve it. 
We then design a $4$-approximation algorithm for the general probability-constraint case, 
improving the above approximation ratio of $\alpha+2=4.5$. 
The algorithm first solves a convex programming relaxation as in the above $O(L\log n)$-approximation algorithm, 
and then constructs a clustering, using a simple thresholding rule.
Our algorithm is a generalization of the $4$-approximation algorithm for the unweighted \textsc{MinDisagree}, designed by \citet{Charikar+05}.

Finally we conduct thorough experiments using real-world datasets 
to evaluate the performance of our algorithms in terms of both solution quality and running time. 
We confirm that ours outperform baseline methods for both the general weighted case and the probability-constraint case. 
In particular, the objective value achieved by our algorithm for the general weighted case is often quite close to the optimal value of the convex programming relaxation, i.e., a lower bound on the optimal value of the problem, meaning that the algorithm tends to obtain a near-optimal solution. 

Due to space limitations, we have deferred proofs of theorems to the Appendix; 
however, we provide proof ideas and sketches in the main paper. 

\section{RELATED WORK}\label{sec:related}

In this section, we review related literature. 

\smallskip
\noindent
\textbf{Special cases of \textsc{MinDisagree}.}
For the unweighted \textsc{MinDisagree},
\citet{Bansal+02,Bansal+04} gave the first constant-factor approximation algorithm with the approximation ratio of 17,429.
Then the ratio has been improved by a series of works.
\citet{Charikar+05} designed a $4$-approximation algorithm. 
\citet{Ailon+08} then gave \textsc{KwikCluster}, a combinatorial randomized $3$-approximation algorithm. 
The authors also proved that its variant based on an LP relaxation improves the approximation ratio from $3$ to $2.5$.
Later \citet{Chawla+15} demonstrated that
a more sophisticated rounding achieves a $2.06$-approximation,
almost matching the integrality gap $2$ of the LP relaxation~\citep{Charikar+05}.
In a recent breakthrough, \citet{Cohen-Addad+22} designed a $(1.994+\epsilon)$-approximation algorithm for any $\epsilon >0$,
using an SDP relaxation, 
which was further improved to $1.73+\epsilon$ via a novel preprocessing~\citep{Cohen-Addad+23}.
Very recently, \citet{Cao+24} designed a $(1.437+\epsilon)$-approximation algorithm running in $O(n^{\mathrm{poly}(1/\epsilon)})$ time, 
by inventing a stronger LP, 
while \citet{Cohen-Addad+24} gave a sublinear-time $1.847$-approximation. 

For \textsc{MinDisagree} with the probability constraint,
\citet{Bansal+02,Bansal+04} proved that any $\alpha$-approximation algorithm for the unweighted \textsc{MinDisagree} 
yields a $(2\alpha+1)$-approximation. 
\citet{Ailon+08} showed that the counterparts of \textsc{KwikCluster}
and that combined with the pseudometric computed by the LP relaxation
achieve a $5$-approximation and a $2.5$-approximation, respectively; the latter is still known to be state-of-the-art. 
If the weights of `$-$' labels satisfy the triangle inequality constraint additionally, it can be improved: 
\citet{Ailon+08} proved that their above algorithms give a $2$-approximation and \citet{Chawla+15} then improved it to $1.5$.

\citet{Gionis+07} studied \emph{Clustering Aggregation} (CA), which is highly related to \textsc{MinDisagree}. 
In the problem, we are given $L$ clusterings of the common set $V$ 
and asked to find a clustering of $V$ that is the most consistent with the given clusterings. 
The (in)consistency is measured by the sum of distances between the output clustering and the given $L$ clusterings, 
where the distance is defined as the number of pairs of elements that are clustered in the opposite way. 
\citet{Gionis+07} proved that CA is a special case of \textsc{MinDisagree} 
with the probability constraint and the triangle inequality constraint. 
We can also directly see that CA is a quite special case of our problem in the unweighted case, 
where each layer already represents a clustering and the parameter $p$ of the $\ell_p$-norm is set to $1$. 
\citet{Gionis+07} also showed that picking up the best clustering among the given $L$ clusterings gives a $2(1-1/L)$-approximation 
while an algorithm similar to the above $4$-approximation algorithm for the unweighted \textsc{MinDisagree} achieves a $3$-approximation.

\smallskip
\noindent
\textbf{Generalizations of \textsc{MinDisagree}.}
The most related problem would be Multi-Chromatic Correlation Clustering (MCCC)~\citep{Bonchi+15}. 
Let $V$ be a set of $n$ elements and $C$ a set of colors.
Each pair of elements in $V$ is associated with a subset of $C$,
meaning that the endpoints are similar in the sense of those colors.
The goal is to find a clustering of $V$ and an assignment of each cluster to a subset of $C$
that is the most consistent with the given similarity information.
The (in)consistency of a clustering is evaluated as follows:
For each pair within a cluster, a distance between the color subsets of the pair and the cluster is charged,
while for each pair across clusters, a distance between the color subset of the pair and the emptyset is charged.
Varying the definition of the distance, a number of concrete models can be obtained.
Although the input of MCCC is essentially the same as that of our problem in the unweighted case, 
ours has three concrete advantages: 
(i) our objective function is more intuitive but can deal with complex relations among the (in)consistency over all layers; 
(ii) MCCC asks to specify the colors (i.e., layers in our case) of each cluster for which the cluster is supposed to be valid, but our problem does not require such an effort; 
(iii) our problem is capable of the general weighted case, while MCCC is defined only for the unweighted case and its generalization is not trivial.

Our problem is also related to Correlation Clustering with \emph{fairness} consideration (e.g., \citep{Ahmadi+20,Ahmadian+20,Puleo+16,Puleo+18})
and \emph{uncertainty} consideration~(e.g., \citep{Joachims+05,Mathieu+10}). 
For details, refer to Appendix~\ref{appdx:related}. 

\smallskip
\noindent
\textbf{Multilayer-network analysis.}
Our problem can also be seen as a clustering model for \emph{multilayer networks}. 
\citet{Chen+24} introduced Multi-Layer Cluster Editing, where
we are given an unweighted multilayer network $G=(V,(E_\ell)_{\ell \in [L]})$ and positive integers $k$ and $d$, 
and the goal is to decide whether there exists a clustering of $V$ that has at most $k$ misclassified pairs on each layer, 
while allowing us to ignore all pairs containing at most $d$ selected nodes. 
The case $d=0$ corresponds to (the decision version of) our problem in the unweighted case with $p=\infty$. 
However, our main focus is on the general weighted case with $p\geq 1$. 
From an algorithmic perspective, they focus on fixed-parameter algorithms, whereas we focus on approximation algorithms.
In addition to clustering, many network-analysis tasks have recently been generalized to multilayer networks.
For details, see Appendix~\ref{appdx:related}.

\section{PROBLEM FORMULATION}\label{sec:problem}

In this section, we formally introduce our problem. 
Let $V$ be a set of $n$ elements. 
Let $E$ be the set of unordered pairs of distinct elements in $V$, i.e., $E=\{\{u,v\}:u,v\in V,\, u\neq v\}$. 
Let $L$ be a positive integer, representing the number of layers.
For each $\ell\in [L]$, let $w^+_\ell\colon E\rightarrow \mathbb{R}_{\geq 0}$ and $w^-_\ell\colon E\rightarrow \mathbb{R}_{\geq 0}$ 
be the weight functions for `$+$' and `$-$' labels, respectively, on that layer. 
Note that to deal with the probability-constraint case in a unified manner, we assume that each pair of elements has \emph{both} `$+$' and `$-$' labels. 
For simplicity, we define $w^+_\ell(u,v)=w^+_\ell(\{u,v\})$ and $w^-_\ell(u,v)=w^-_\ell(\{u,v\})$ for $\ell\in [L]$ and $\{u,v\}\in E$.
Let $\mathcal{C}$ be a clustering (i.e., a partition) of $V$, that is, $\mathcal{C}=\{C_1,\dots, C_t\}$ such that $\bigcup_{i\in [t]}C_i=V$ and $C_i\cap C_j=\emptyset$ for $i,j\in [t]$ with $i\neq j$. For $v\in V$, we denote by $\mathcal{C}(v)$ the (unique) element (i.e., cluster) in $\mathcal{C}$ to which $v$ belongs. 
Then, for $u,v\in V$, $\1[\mathcal{C}(u)=\mathcal{C}(v)]=1$ if $u,v$ belong to the same cluster and $\1[\mathcal{C}(u)\neq \mathcal{C}(v)]=0$ otherwise.
The \emph{disagreement} of $\mathcal{C}$ on layer $\ell \in [L]$ is defined as the sum of weights of misclassified labels on that layer, i.e., 
\begin{align*}
\Disagree_\ell(\mathcal{C})=\sum_{\{u,v\}\in E }&\left(w^+_\ell(u,v)\1[\mathcal{C}(u)\neq \mathcal{C}(v)]\right.\\ 
&\quad \left. +w^-_\ell(u,v)\1[\mathcal{C}(u)= \mathcal{C}(v)]\right).
\end{align*}
Then the \emph{multilayer-disagreements vector} of $\mathcal{C}$ is defined as
$\mathbf{Disagree}(\mathcal{C}) = (\Disagree_\ell(\mathcal{C}))_{\ell\in [L]}$.
\begin{problem}[Multilayer Correlation Clustering]\label{prob:general}
Fix $p\in [1,\infty]$.
Given $V$ and $(w^+_\ell, w^-_\ell)_{\ell \in [L]}$,
we are asked to find a clustering $\mathcal{C}$ of $V$ that minimizes $\|\mathbf{Disagree}(\mathcal{C})\|_p$, 
i.e., $\left(\sum_{\ell \in [L]}\Disagree_\ell(\mathcal{C})^p\right)^{1/p}$, if $p<\infty$ and $\max_{\ell\in [L]}\Disagree_\ell(\mathcal{C})$ if $p=\infty$.
\end{problem}

Problem~\ref{prob:general} is a generalization of \textsc{MinDisagree}. 
Varying the value of $p$, we can obtain a series of objective functions that evaluate the (in)consistency of the given clustering over the layers in a variety of regimes. 
If we set $p=1$, the problem aims to minimize the sum of disagreements over all layers. 
It is easy to see that this case can be reduced to \textsc{MinDisagree} in an approximation-preserving manner; 
therefore, the problem is $O(\log n)$-approximable~\citep{Charikar+05,Demaine+06}. 
If we set $p=\infty$, the problem aims to minimize the maximal disagreements over all layers, 
which is an important case of our interest.

An important special case of Problem~\ref{prob:general} is that 
$w^+_\ell, w^-_\ell$ for every layer $\ell\in [L]$ satisfy the so-called \emph{probability constraint}, 
i.e., $w^+_\ell(u,v)+w^-_\ell(u,v)=1$ for any $\{u,v\}\in E$. 
Note that the most fundamental special case, i.e., the unweighted case, is contained in this case, where $w^-_\ell(u,v)=1-w^+_\ell(u,v)=0$ or $1$. 
A further special case is Problem~\ref{prob:general} with the probability constraint and the \emph{triangle inequality constraint}. 
The triangle inequality constraint stipulates that on every layer $\ell\in [L]$, 
$w^-_\ell(u,w)\leq w^-_\ell(u,v)+w^-_\ell(v,w)$ holds for any distinct $u,v,w\in V$.
Obviously, in the case of $p=1$, Problem~\ref{prob:general} with the probability constraint (and the triangle inequality constraint) 
can be reduced to \textsc{MinDisagree} with the probability constraint (and the triangle inequality constraint)
in an approximation-preserving manner. 
Indeed, summing up the weights over all layers for each pair of elements and dividing it by $L$, 
we can obtain an equivalent instance of \textsc{MinDisagree} with the probability constraint (and the triangle inequality constraint). 
Therefore, we see that the problem is still $2.5$-approximable~\citep{Ailon+08} in the probability-constraint case 
and $1.5$-approximable~\citep{Chawla+15} in the probability-constraint and triangle inequality constraint case. 
Note however that for Problem~\ref{prob:general} in the unweighted case, there is no trivial reduction that improves upon the above $2.5$-approximation (despite the $(1.437+\epsilon)$-approximation known for the unweighted \textsc{MinDisagree}).

\section{ALGORITHM FOR PROBLEM~\ref{prob:general}}\label{sec:general}

In this section, we design an $O(L\log n)$-approximation algorithm for Problem~\ref{prob:general}.

\subsection{The Proposed Algorithm}
We first present $0$--$1$ convex programming formulations for Problem~\ref{prob:general}. 
For distinct $i,j\in V$, we introduce $0$--$1$ variables $x_{ij},x_{ji}$, both of which take $0$ if $i,j$ belong to the same cluster and $1$ otherwise. 
Then, in the case of $p<\infty$, Problem~\ref{prob:general} can be formulated as follows: 
\begin{alignat}{3}
&\text{min.} & &\hspace{-0.5mm}\left(\sum_{\ell\in [L]}\!\left(\sum_{\{i,j\}\in E}\hspace{-2mm}\left(w^+_\ell(i,j)x_{ij}+w^-_\ell(i,j)(1-x_{ij})\right)\!\right)^{\hspace{-1.5mm}p}\right)^{\hspace{-1.5mm}\frac{1}{p}}\nonumber\\
&\text{s.t.} &    &x_{ij}=x_{ji} \quad \text{for all distinct } i,j\in V,\label{constr1}\\
&                  &    &x_{ik}\leq x_{ij}+x_{jk} \quad \text{for all distinct } i,j,k\in V,\label{constr2}\\
&                  &    &x_{ij}\in \{0,1\}       \quad \text{for all distinct } i,j\in V\label{constr3}.
\end{alignat}
On the other hand, in the case of $p=\infty$, we have the following $0$--$1$ LP formulation:
\begin{alignat*}{3}
&\text{min.}   &\ &t\\
&\text{s.t.} &    &\sum_{\{i,j\}\in E}\left(w^+_\ell(i,j)x_{ij}+w^-_\ell(i,j)(1-x_{ij})\right) \leq t \\
&&&\qquad \qquad \qquad \qquad \qquad \qquad \qquad \text{for all } \ell\in [L],\\
&&&\text{Constraints\ } \eqref{constr1}\text{--}\eqref{constr3}. 
\end{alignat*}
For the above formulations, by relaxing the constraints $x_{ij}\in \{0,1\}$ to $x_{ij}\in [0,1]$ for all distinct $i,j\in V$, 
we can obtain continuous relaxations of Problem~\ref{prob:general}, which we refer to as (CV) and (LP), respectively. 
Let $\bm{x}=(x_{ij})_{i,j\in V:\,i\neq j}$. 
It should be noted that (CV) is a convex programming problem. 
Indeed, the objective function is convex, 
as it is a vector composition of form $f(g(\bm{x})) = f(g_1(\bm{x}),\dots, g_L(\bm{x}))$, 
where $f\colon \mathbb{R}_{\geq 0}^L\rightarrow \mathbb{R}_{\geq 0}$ is an $\ell_p$-norm of $p\geq 1$, which is convex and non-decreasing in each argument, 
and $g_\ell\colon \mathbb{R}_{\geq 0}^{E}\rightarrow \mathbb{R}_{\geq 0}$ is linear and thus convex for every $\ell\in [L]$; 
moreover, the set of feasible solutions is obviously convex. 
Therefore, we can solve the problem to arbitrary precision in polynomial time, using an appropriate method for convex programming such as an interior-point method~\citep{Boyd+04}. 
For simplicity, we suppose that (CV) can be solved exactly in polynomial time. 
On the other hand, (LP) is indeed an LP, and thus can be solved exactly in polynomial time. 
Let $\mathrm{OPT}_\mathrm{CV}$ and $\mathrm{OPT}_\mathrm{LP}$ be the optimal values of the above relaxations. 

Our algorithm first solves an appropriate relaxation, (CV) or (LP), depending on the value of $p$, 
and obtains its optimal solution $\bm{x}^*=(x^*_{ij})_{i,j\in V:\, i\neq j}$. 
Then the algorithm extends $\bm{x}^*$ to $\bm{x}^*=(x^*_{ij})_{i,j\in V}$ by setting $x^*_{ii}=0$ for every $i\in V$. 
Obviously $\bm{x}^*$ is a pseudometric over $V$, 
i.e., a relaxed metric where a distance between distinct elements may be equal to $0$. 
Based on this, the algorithm constructs a clustering in an iterative manner: 
The algorithm initially has the entire set $V$. 
In each iteration, the algorithm takes an arbitrary element called a \emph{pivot} in the current set 
and constructs a cluster by collecting the pivot itself  
and the other elements that are located at distance less than some carefully-chosen value from the pivot. 
The algorithm removes the cluster and repeats the process until none remain. 

To describe the algorithm formally, we introduce notation. 
Without loss of generality, we assume that at most one of $w^+_\ell(u,v)$ and $w^-_\ell(u,v)$ is nonzero for any $\ell\in [L]$ and $\{u,v\}\in E$. 
Otherwise we can transform the instance into another that satisfies the above condition while yielding the same approximation to the original instance
\citep{Bonchi+22}. 
Based on the assumption, for each $\ell\in [L]$, 
we introduce two mutually-disjoint sets $E^+_\ell=\{\{u,v\}\in E: w^+_\ell(u,v)>0\}$ and $E^-_\ell=\{\{u,v\}\in E: w^-_\ell(u,v)>0\}$, 
and define $w_\ell\colon E^+_\ell\cup E^-_\ell\rightarrow \mathbb{R}_{>0}$ 
such that $w_\ell(u,v)=w^+_\ell(u,v)$ if $\{u,v\}\in E^+_\ell$ and $w_\ell(u,v)=w^-_\ell(u,v)$ if $\{u,v\}\in E^-_\ell$. 
Let $U$ be an arbitrary subset of $V$. 
For $i\in U$ and $r\geq 0$, 
we denote by $B_U(i,r)$ the open ball of center $i$ and radius $r$ in $U$, i.e., 
$B_U(i,r)=\{j\in U:x^*_{ij}<r\}$.
For $B_U(i,r)$, we define its cut value $\text{cut}_{(U,\ell)}(B_U(i,r))$ within $U$ on layer $\ell\in [L]$ 
as the sum of weights of `$+$' labels across $B_U(i,r)$ and $U\setminus B_U(i,r)$ on $\ell\in [L]$, i.e., 
\begin{align*}
\text{cut}_{(U,\ell)}(B_U(i,r))
\!=\sum_{\{j,k\}\in E^+_\ell:\,j\in B_{U}(i,r)\land k\in U\setminus B_{U}(i,r)}\hspace{-4.8mm}w_\ell(j,k). 
\end{align*}
For $B_U(i,r)$, we define its volume $\text{vol}_{(U,\ell)}(B_U(i,r))$ within $U$ on layer $\ell\in [L]$ as 
\begin{align*}
\text{vol}_{(U,\ell)}(B_U&(i,r))
=\frac{F_\ell}{n}+\sum_{\{j,k\}\in E^+_\ell:\, j,k\in B_U(i,r)}\hspace{-7mm}w_\ell(j,k)x^*_{jk}\\
&+\sum_{\{j,k\}\in E^+_\ell:\,j\in B_U(i,r)\land k\in U\setminus B_U(i,r)}\hspace{-13mm}w_\ell(j,k)(r-x^*_{ij}), 
\end{align*}
where $F_\ell=\sum_{\{j,k\}\in E^+_\ell}w_\ell(j,k)x^*_{jk}$. 

Our formal algorithm is presented in Algorithm~\ref{alg:general}. The feature can be found in the radius selection: 
In the $t$-th iteration, the algorithm selects the radius $r^*_{(t)}$ that minimizes the maximal ratio of the cut value to the volume of the ball of the chosen pivot $i^{(t)}$ over all layers $\ell\in [L]$ with $F_\ell \neq 0$. 
Here we provide an intuitive explanation of the role of the volume.
If the radius were chosen solely to minimize the cut value, the resulting clusters would tend to be quite small, 
leading to large disagreements for the pairs of elements with `$+$' labels.
The volume term helps mitigate this issue.
Indeed, by incorporating volume, the algorithm tends to \emph{consume} a relatively large portion of the remaining set.

\begin{algorithm2e}[t]
\caption{Our algorithm for Problem~\ref{prob:general}}\label{alg:general}
\SetKwInput{Input}{Input}
\SetKwInput{Output}{Output}
\Input{$V$ and $(w^+_\ell,w^-_\ell)_{\ell \in [L]}$}
Compute an optimal solution $\bm{x}^*=(x^*_{ij})_{i,j\in V:\, i\neq j}$ to (CV) if $p<\infty$ and (LP) if $p=\infty$\;\label{line:alg_1}
Extend $\bm{x}^*$ to $\bm{x}^*=(x^*_{ij})_{i,j\in V}$ by setting $x^*_{ii}=0$ for every $i\in V$\;\label{line:alg_2}
Take an arbitrary $c>2$\;
$\mathcal{B}\leftarrow \emptyset$, $V^{(1)}\leftarrow V$, and $t\leftarrow 1$\;
\While{$V^{(t)}\neq \emptyset$}{
  Take an arbitrary pivot $i^{(t)}\in V^{(t)}$\;
  Compute the radius $\displaystyle r^*_{(t)} \in \argmin_{r\in (0,1/c]}\max_{\ell\in [L]:\,F_\ell\neq 0}\frac{\text{cut}_{(V^{(t)},\ell)}(B_{V^{(t)}}(i^{(t)},r))}{\text{vol}_{(V^{(t)},\ell)}(B_{V^{(t)}}(i^{(t)},r))}$\;
  $\mathcal{B}\leftarrow \mathcal{B}\cup \{B_{V^{(t)}}(i^{(t)},r^*_{(t)})\}$, $V^{(t+1)}\leftarrow V^{(t)}\setminus B_{V^{(t)}}(i^{(t)},r^*_{(t)})$, and $t\leftarrow t+1$\;
}
\Return{$\mathcal{B}$}\;
\end{algorithm2e}

\subsection{Analysis of Algorithm~\ref{alg:general}}
We have the following key lemma: 
\begin{lemma}\label{lem:key_weighted}
In Algorithm~\ref{alg:general}, for any $t=1,\dots, |\mathcal{B}|$, 
\begin{align*}
\max_{\ell\in [L]:\,F_\ell\neq 0}\frac{\mathrm{cut}_{(V^{(t)},\ell)}(B_{V^{(t)}}(i^{(t)},r^*_{(t)}))}{\mathrm{vol}_{(V^{(t)},\ell)}(B_{V^{(t)}}(i^{(t)},r^*_{(t)}))}\leq cL\log(n+1), 
\end{align*}
and $B_{V^{(t)}}(i^{(t)},r^*_{(t)})$ can be computed in $O(Ln^2)$ time. 
\end{lemma}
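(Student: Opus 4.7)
My plan is to extend the classical single-layer region-growing analysis~\cite{Garg+96,Charikar+05} to the multilayer setting, combining the $L$ layers through a simple $\max_\ell\le\sum_\ell$ step that absorbs the extra factor of $L$ in the statement.

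Fix the iteration $t$ and abbreviate $B(r):=B_{V^{(t)}}(i^{(t)},r)$. For each $\ell\in[L]$ with $F_\ell\neq 0$, study the functions $c_\ell(r):=\mathrm{cut}_{(V^{(t)},\ell)}(B(r))$ and $v_\ell(r):=\mathrm{vol}_{(V^{(t)},\ell)}(B(r))$ on $(0,1/c]$. Between two consecutive ``breakpoints'' $x^*_{i^{(t)}k}$ (with $k\in V^{(t)}$), the ball $B(r)$ is constant, so $c_\ell$ is constant and $v_\ell$ is linear in $r$ with slope $c_\ell$. At each breakpoint, a new element $k$ joins $B$, and for every $j\in B$ with $\{j,k\}\in E^+_\ell$ the triangle inequality for $\overline{\bm x}^*$ yields $x^*_{jk}\ge x^*_{i^{(t)}k}-x^*_{i^{(t)}j}=r-x^*_{i^{(t)}j}$, so the new in-cluster contribution $w_\ell(j,k)x^*_{jk}$ dominates the old boundary contribution $w_\ell(j,k)(r-x^*_{i^{(t)}j})$: the jump of $v_\ell$ is non-negative. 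Applying the same inequality term-by-term in the definition of $v_\ell$ gives $v_\ell(1/c)\le F_\ell/n+F_\ell=F_\ell(n+1)/n$, while at $r\to 0^+$ only the base term survives, so $v_\ell(0^+)=F_\ell/n$. Hence $\log v_\ell(1/c)-\log v_\ell(0^+)\le\log(n+1)$ for every such $\ell$.

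The main step is a contradiction argument. Suppose $\max_{\ell:F_\ell\neq 0}c_\ell(r)/v_\ell(r)>cL\log(n+1)$ for every $r\in(0,1/c]$. Integrating this strict bound over $(0,1/c]$ and using $\max_\ell\le\sum_\ell$ yields
\[
L\log(n+1) < \int_0^{1/c}\max_{\ell}\frac{c_\ell(r)}{v_\ell(r)}\,dr \le \sum_{\ell:F_\ell\neq 0}\int_0^{1/c}\frac{c_\ell(r)}{v_\ell(r)}\,dr.
\]
Because $v_\ell$ is non-decreasing with $(\log v_\ell)'(r)=c_\ell(r)/v_\ell(r)$ almost everywhere and non-negative jumps at the breakpoints, each summand on the right is bounded by $\log v_\ell(1/c)-\log v_\ell(0^+)\le\log(n+1)$, so the right-hand side is at most $L\log(n+1)$, a contradiction. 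Thus some $r\in(0,1/c]$ attains the desired bound.

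Polynomial-time computability follows from the piecewise structure: on each of the $O(n)$ intervals between consecutive breakpoints, every $c_\ell$ is constant and every $v_\ell$ is linear with strictly positive slope, so $\max_\ell c_\ell/v_\ell$ is non-increasing in $r$ on that interval. The infimum over $(0,1/c]$ is therefore attained at one of the interval right endpoints or at $r=1/c$, each of which can be evaluated in polynomial time. The only real subtlety I anticipate is verifying the non-negativity of the volume jumps, which is precisely where the triangle-inequality property of $\overline{\bm x}^*$ enters; once this is in hand, the multilayer analysis reduces cleanly to the single-layer one, with the factor $L$ arising entirely from the $\max_\ell\le\sum_\ell$ step.
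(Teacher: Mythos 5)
Your proposal is correct and follows essentially the same route as the paper: the contradiction argument via integrating the max ratio over $(0,1/c]$, bounding $\max_\ell$ by $\sum_\ell$, telescoping $\log \mathrm{vol}_\ell$ across the breakpoints to get the $L\log(n+1)$ bound, and using piecewise monotonicity of the ratio for polynomial-time computation. Your explicit verification that the volume jumps at breakpoints are non-negative (via the triangle inequality) is a detail the paper passes over silently, but it does not change the argument.
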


Let $\mathcal{B}$ be the output of Algorithm~\ref{alg:general}. 
Our analysis is layer-wise, but it directly leads to the evaluation of the disagreements over layers. 
The disagreements of $\mathcal{B}$ produced by the pairs of elements with `$+$' labels on layer $\ell\in [L]$ with $F_\ell\neq 0$ 
equal the sum of weights of `$+$' labels for those pairs across clusters in $\mathcal{B}$, 
which can be upper bound by $O(L\log n)$ times the sum of volumes of clusters in $\mathcal{B}$, 
using Lemma~\ref{lem:key_weighted}. 
As the sum of volumes is further upper bounded by the sum of corresponding terms in the optimal objective to (CV) or (LP), 
we can obtain an $O(L\log n)$-approximation for that part. 
The disagreements of $\mathcal{B}$ produced by the other pairs are more easily upper bounded. 
We have the following theorem: 
\begin{theorem}\label{thm:general}
Algorithm~\ref{alg:general} is a polynomial-time $O(L\log n)$-approximation algorithm for Problem~\ref{prob:general}. Specifically, the time complexity is $O(T_\mathrm{CV}+Ln^3)$ if $p<\infty$ and $O(T_\mathrm{LP}+Ln^3)$ if $p=\infty$, where $T_\mathrm{CV}$ and $T_\mathrm{LP}$ denote the time complexities required to solve (CV) and (LP), respectively. 
\end{theorem}

Finally we mention the integrality gaps of (CV) and (LP). 
For \textsc{MinDisagree}, 
the LP relaxation used in the $O(\log n)$-approximation has the integrality gap of $\Omega(\log n)$~\citep{Charikar+05,Demaine+06}. 
As our relaxations, (CV) and (LP), are its generalizations, the integrality gap of $\Omega(\log n)$ is inherited. 
This matches our approximation ratio in the case of $L=O(1)$ but there remains a gap in general.

\section{ALGORITHMS WITH PROBABILITY CONSTRAINT}\label{sec:probability}

In this section, we present our algorithms for Problem~\ref{prob:general} with the probability constraint. 

\subsection{The $(\alpha+2)$-Approximation Algorithm}
To design the algorithm, we reduce Problem~\ref{prob:general} with the probability constraint to a novel optimization problem in a metric space. 
Let $X$ be a set. 
Let $d \colon X\times X\rightarrow \mathbb{R}_{\geq 0}$ be a \emph{metric} on $X$, 
i.e., $d(x,y)=0$ if and only if $x=y$ for $x,y\in X$, $d(x,y)=d(y,x)$ for $x,y\in X$, and $d(x,z)\leq d(x,y)+d(y,z)$ for $x,y,z\in X$. 
In general, $(X,d)$ is called a \emph{metric space}. 
We introduce the following problem: 
\begin{problem}[Find the Most Representative Candidate in a Metric Space]\label{prob:general_metric}
Fix $p\geq 1$. 
Let $(X,d)$ be a metric space.
Given $x_1,\dots, x_L \in X$ and a candidate set $F\subseteq X$, 
find $x\in F$ that minimizes $\left(\sum_{\ell\in [L]}d(x,x_\ell)^p\right)^{1/p}$ if $p<\infty$ and $\max_{\ell\in [L]}d(x,x_\ell)$ if $p=\infty$. 
\end{problem}

We assume that the candidate set $F\subseteq X$ is given as the oracle 
that receives an arbitrary $x\in X$ and determines whether $x\in F$ in polynomial time. 
Then we have the following key lemma: 
\begin{lemma}\label{lem:reduction}
There exists a polynomial-time approximation-preserving reduction from Problem~\ref{prob:general} with the probability constraint to Problem~\ref{prob:general_metric}. 
\end{lemma}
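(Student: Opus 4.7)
The plan is to exhibit a single metric space in which both the layers of the input and all clusterings of $V$ sit as points, chosen so that the $\ell_1$ distance between a clustering and a layer is exactly the disagreement of that clustering on that layer. Concretely, I would take $X=[0,1]^{E}$ equipped with the $\ell_1$ metric $d(y,y')=\sum_{\{u,v\}\in E}|y_{\{u,v\}}-y'_{\{u,v\}}|$, which is a bona fide metric on $X$. For each layer $\ell\in[L]$ define the point $x_\ell\in X$ whose coordinate on pair $\{u,v\}$ is $w^-_\ell(u,v)$; the probability constraint ensures $x_\ell\in[0,1]^E$. For each clustering $\mathcal{C}$ of $V$ define $z_\mathcal{C}\in\{0,1\}^E\subseteq X$ by $(z_\mathcal{C})_{\{u,v\}}=\1[\mathcal{C}(u)\neq\mathcal{C}(v)]$, and let $F=\{z_\mathcal{C}:\mathcal{C}\text{ is a clustering of }V\}\subseteq X$.

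The core of the proof is then a one-line identity that leverages the probability constraint $w^+_\ell(u,v)+w^-_\ell(u,v)=1$. Splitting the sum according to whether $\mathcal{C}(u)=\mathcal{C}(v)$, one gets $d(z_\mathcal{C},x_\ell)=\sum_{\{u,v\}:\mathcal{C}(u)=\mathcal{C}(v)}w^-_\ell(u,v)+\sum_{\{u,v\}:\mathcal{C}(u)\neq\mathcal{C}(v)}(1-w^-_\ell(u,v))=\Disagree_\ell(\mathcal{C})$. Hence for every $p\in[1,\infty]$, $\|\mathbf{Disagree}(\mathcal{C})\|_p$ is exactly the Problem~\ref{prob:general_metric} objective evaluated at $z_\mathcal{C}$ with respect to $(x_\ell)_{\ell\in[L]}$. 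The map $\mathcal{C}\mapsto z_\mathcal{C}$ is moreover a bijection onto $F$: its inverse recovers a clustering as the equivalence classes of the relation $u\sim v\iff z_{\{u,v\}}=0$, which is automatically transitive because $z$ arose from an actual partition.

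Since corresponding solutions have identical objective values, any $\alpha$-approximate candidate $z^*\in F$ for the metric-space instance decodes in $O(n^2)$ time to an $\alpha$-approximate clustering of the original instance, so the reduction is approximation-preserving. The forward construction is also polynomial: it only builds the $L$ vectors $x_1,\ldots,x_L$, each of length $|E|=O(n^2)$, and specifies $F$ implicitly through $V$. The one place one has to pay attention is to synchronize the two encodings correctly---\emph{dissimilarity weights} $w^-_\ell(u,v)$ on the layer side and \emph{different-cluster indicators} on the clustering side---after which the identity above, and with it the lemma, follows immediately with no further calculation to grind through.
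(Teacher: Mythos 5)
Your proposal is correct and is essentially identical to the paper's proof: the same metric space $([0,1]^E,\ell_1)$, the same embedding of layers via $w^-_\ell$ and of clusterings via the different-cluster indicator vectors, and the same one-line identity $d(z_{\mathcal{C}},x_\ell)=\Disagree_\ell(\mathcal{C})$ using the probability constraint. The only cosmetic difference is that the paper defines $F$ intrinsically (as the $0$--$1$ vectors whose zero-set induces a partition into cliques) while you define it as the image of the encoding map, which is the same set.
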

\begin{proof}
Fix $p\geq 1$.
Let $V$ and $(w^+_\ell,w^-_\ell)_{\ell \in [L]}$ be the input of Problem~\ref{prob:general} with the probability constraint,
satisfying $w^+_\ell(u,v)+w^-_\ell(u,v)=1$ for any $\ell\in [L]$ and $\{u,v\}\in E$.
We construct an instance of Problem~\ref{prob:general_metric}: 
Let $X = [0,1]^E$ and $d\colon X\times X\rightarrow \mathbb{R}_{\geq 0}$ be a metric such that $d(x,y)\coloneqq \|x-y\|_1$ for $x,y\in X$.
For $x\in X$ and $\{u,v\}\in E$, we denote by $x(u,v)$ the element of $x$ associated with $\{u,v\}$.
For each $\ell \in [L]$, let $x_\ell\in X$ be the element such that $x_\ell(u,v)=w^-_\ell(u,v)$ for $\{u,v\}\in E$.
Let $F=\{x\in \{0,1\}^E : \text{$x$ induces a clustering of $V$}\}$.
Here $x$ is said to \emph{induce a clustering of} $V$
if every connected component in $(V,E_x)$, where $E_x=\{\{u,v\}\in E : x(u,v)=0\}$, is a clique.
Then we see that there is a one-to-one correspondence between $F$ and the set of clusterings of $V$.
Take an arbitrary element $x\in F$ 
and let $\mathcal{C}_x$ be the clustering corresponding to $x$.
Then we have that for any $\ell\in [L]$,
\begin{align*}
d(x,x_\ell)&=\|x-x_\ell\|_1 \\
&=\sum_{\{u,v\}\in E}\left((1-w^-_\ell(u,v))\1[\mathcal{C}_x(u)\neq \mathcal{C}_x(v)]\right.\\
&\left. \qquad \qquad \qquad +\, w^-_\ell(u,v)\1[\mathcal{C}_x(u)=\mathcal{C}_x(v)]\right)\\
&=\sum_{\{u,v\}\in E}\left(w^+_\ell(u,v)\1[\mathcal{C}_x(u)\neq \mathcal{C}_x(v)]\right. \\
&\left. \qquad \qquad \qquad +\, w^-_\ell(u,v)\1[\mathcal{C}_x(u)=\mathcal{C}_x(v)]\right) \\
&=\Disagree_\ell(\mathcal{C}_x),
\end{align*}
meaning that the objective function of Problem~\ref{prob:general_metric} is equivalent to that of Problem~\ref{prob:general} with the probability constraint.
Therefore, $x$ is a $\beta$-approximate solution to Problem~\ref{prob:general_metric}
if and only if so is $\mathcal{C}_x$ to Problem~\ref{prob:general} with the probability constraint.
The above reduction can be done in polynomial time.
\end{proof}

Therefore, we design an approximation algorithm for Problem~\ref{prob:general_metric}, 
resulting in the same approximation for Problem~\ref{prob:general} 
with the probability constraint. 
We introduce the following subproblem: 
\begin{problem}[Find the Closest Candidate in a Metric Space]\label{prob:sub}
Let $(X,d)$ be a metric space. 
Given $x \in X$ and a candidate set $F\subseteq X$, find $x'\in F$ that minimizes $d(x,x')$. 
\end{problem}

Assume now that we have an $\alpha$-approximation algorithm for Problem~\ref{prob:sub}. 
Let $x_1,\dots, x_L\in X$ and $F\subseteq X$ be the input of Problem~\ref{prob:general_metric}. 
Our approximation algorithm for Problem~\ref{prob:general_metric} runs as follows: 
For every $\ell\in [L]$, the algorithm obtains an $\alpha$-approximate solution $x'_\ell\in F$ 
for Problem~\ref{prob:sub} with input $x_\ell\in X$ and $F\subseteq X$, 
using the $\alpha$-approximation algorithm for Problem~\ref{prob:sub}.
Then the algorithm outputs the best solution among $x'_1,\dots, x'_L$ 
in terms of the objective function of Problem~\ref{prob:general_metric}. 
For reference, the pseudocode, referred to as Algorithm~\ref{alg:general_metric}, is provided in Appendix~\ref{appx:pseudo_1}.

\smallskip
\textbf{Analysis.} 
We analyze the approximation ratio of Algorithm~\ref{alg:general_metric}. 
Let $x^*\in F$ be an optimal solution to Problem~\ref{prob:general_metric}.
Let $x_\mathrm{closest}\in \argmin_{x\in \{x_1,\dots,x_L\}}d(x,x^*)$
and $x'_\mathrm{closest}$ be the $\alpha$-approximate solution for Problem~\ref{prob:sub} with input $x_\mathrm{closest}$ and $F$. 
By repeatedly applying the triangle inequality over $d$, we obtain 
$d(x'_\mathrm{closest},x_\ell)\leq (\alpha+2)\cdot d(x^*,x_\ell)$ for any $\ell \in [L]$. 
Noticing the facts that $x'_\mathrm{closest}$ is one of the candidates of the output of the algorithm 
and that the evaluation of the point-wise distance directly leads to the evaluation of the objective value of Problem~\ref{prob:general_metric}, we have the following: 
\begin{theorem}\label{thm:probability}
Algorithm~\ref{alg:general_metric} is an $(\alpha+2)$-approximation algorithm for Problem~\ref{prob:general_metric}. 
\end{theorem}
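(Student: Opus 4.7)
The plan is to show that the single candidate $x'_{\ell^*}$ produced in one specific iteration is already an $(\alpha+2)$-approximation, which then transfers to $x_{\mathrm{out}}$ because the algorithm outputs the best among $x'_1,\dots,x'_L$ with respect to the Problem~\ref{prob:general_metric} objective.

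Fix an optimal solution $x^* \in F$ to Problem~\ref{prob:general_metric}, and let $\ell^* \in \argmin_{\ell\in[L]} d(x^*, x_\ell)$, i.e., choose the index of the input point that is closest to the optimum under $d$. This choice of $\ell^*$ is the crux: it ensures $d(x^*, x_{\ell^*}) \le d(x^*, x_k)$ for every $k\in[L]$, which is exactly the slack we will need to absorb the approximation loss.

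Next I would carry out a single triangle-inequality computation to bound $d(x'_{\ell^*}, x_k)$ for an arbitrary $k\in[L]$. Since $x'_{\ell^*}$ is an $\alpha$-approximate closest candidate to $x_{\ell^*}$ in $F$ and $x^* \in F$ is a feasible competitor, we have $d(x'_{\ell^*}, x_{\ell^*}) \le \alpha\, d(x^*, x_{\ell^*})$. Two applications of the triangle inequality give
\begin{align*}
d(x'_{\ell^*}, x_k)
&\le d(x'_{\ell^*}, x_{\ell^*}) + d(x_{\ell^*}, x^*) + d(x^*, x_k) \\
&\le \alpha\, d(x^*, x_{\ell^*}) + d(x^*, x_{\ell^*}) + d(x^*, x_k) \\
&\le (\alpha+2)\, d(x^*, x_k),
\end{align*}
where the last step uses the defining property of $\ell^*$. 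This pointwise inequality is the heart of the argument.

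Finally, since $\ell_p$-norms and the max are monotone in each nonnegative coordinate, the pointwise bound immediately lifts to the objective: for $p<\infty$,
\begin{align*}
\left(\sum_{\ell\in[L]} d(x'_{\ell^*}, x_\ell)^p\right)^{1/p}
\le (\alpha+2)\left(\sum_{\ell\in[L]} d(x^*, x_\ell)^p\right)^{1/p},
\end{align*}
and analogously $\max_\ell d(x'_{\ell^*}, x_\ell) \le (\alpha+2)\max_\ell d(x^*, x_\ell)$ for $p=\infty$. Because $x_{\mathrm{out}}$ is defined as the best point among $\{x'_1,\dots,x'_L\}$ with respect to this very objective, the same bound holds for $x_{\mathrm{out}}$, proving the theorem. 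The only nontrivial idea is isolating the closest-to-optimum index $\ell^*$; everything else is routine triangle inequality and monotonicity, so I do not anticipate a genuine obstacle.
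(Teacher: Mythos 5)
Your proposal is correct and follows essentially the same argument as the paper: the paper also isolates the input point $x_{\mathrm{closest}}$ closest to the optimum $x^*$, applies the $\alpha$-approximation guarantee plus two triangle inequalities to obtain the pointwise bound $(\alpha+2)\,d(x^*,x_\ell)$, and then lifts it to the objective via monotonicity and the fact that $x_{\mathrm{out}}$ is the best candidate. No gaps.
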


In Algorithm~\ref{alg:general_metric}, the possible approximation ratio of $\alpha$ for Problem~\ref{prob:sub} 
depends on the metric space $(X,d)$ and part of input $F\subseteq X$, inherited from Problem~\ref{prob:general_metric}. 
By interpreting Problem~\ref{prob:general} with the probability constraint (or any of its special cases) 
as Problem~\ref{prob:general_metric} with specific metric space $(X,d)$ and part of input $F\subseteq X$, 
we can obtain the series of approximability results: 
\begin{corollary}\label{cor:prob}
(i) There exists a polynomial-time $4.5$-approximation algorithm for Problem~\ref{prob:general} with the probability constraint. (ii) For any $\epsilon>0$, there exists a polynomial-time $(3.437+\epsilon)$-approximation algorithm for Problem~\ref{prob:general} in the unweighted case. (iii) There exists a polynomial-time $3.5$-approximation algorithm for Problem~\ref{prob:general} with the probability constraint and the triangle inequality constraint. 
\end{corollary}

\subsection{The 4-Approximation Algorithm}

Our algorithm first obtains $\bm{x}^*=(x^*_{ij})_{i,j\in V}$ in exactly the same way as that of Algorithm~\ref{alg:general}. Based on the pseudometric $\bm{x}^*$ over $V$, the algorithm then constructs a clustering, using a simple thresholding rule. 
Let $U$ be an arbitrary subset of $V$. 
For $i\in U$ and $r\geq 0$, we denote by $B_U(i,r)$ the closed ball of center $i$ and radius $r$ in $U$, i.e., $B_U(i,r)=\{j\in U:x^*_{ij}\leq r\}$. 
Our algorithm initially set $U=V$. 
In each iteration, the algorithm takes an arbitrary element $i\in U$ and initializes a cluster $B= \{i\}$. 
Then the algorithm constructs $C=B_U(i,1/2)\setminus \{i\}$. 
If the average distance between $i$ and the elements in $C$ is less than $1/4$, i.e., $\frac{1}{|C|}\sum_{j\in C}x^*_{ij}<1/4$, 
then the algorithm updates $B$ by adding all elements in $C$. 
The algorithm removes $B$ from $U$ as a cluster of the output, and repeats the procedure until $U=\emptyset$. 
The pseudocode, referred to as Algorithm~\ref{alg:threshold}, is provided in Appendix~\ref{appx:pseudo_2}.

\smallskip
\textbf{Analysis.} The intuition of the analysis is similar to that of Algorithm~\ref{alg:general}. 
Based on the thresholding rule together with the probability constraint, we can obtain the approximation ratio of $4$: 
\begin{theorem}
\label{thm:probability2}
Algorithm~\ref{alg:threshold} achieves a $4$-approximation for Problem~\ref{prob:general} with the probability constraint.
\end{theorem}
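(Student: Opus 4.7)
The key structural observation is that the rounding step of Algorithm~\ref{alg:threshold} depends only on the pseudometric $\overline{\bm{x}}^*$ and is entirely oblivious to the layer-specific weights $(w^+_\ell, w^-_\ell)$. Hence the single returned clustering $\mathcal{B}$ can be analyzed independently on each layer, and my plan is to prove a per-layer $4$-approximation bound and then lift it to the $\ell_p$-norm by monotonicity.

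The central lemma I would establish is the probability-constraint analogue of the single-layer $4$-approximation of Charikar et al.~\cite{Charikar+05}: for every layer $\ell \in [L]$,
\begin{align*}
\Disagree_\ell(\mathcal{B})\;\leq\;4\sum_{\{u,v\}\in E}\bigl(w^+_\ell(u,v)\,x^*_{uv}+w^-_\ell(u,v)(1-x^*_{uv})\bigr).
\end{align*}
The proof would proceed pivot by pivot. For each pivot $i$ selected with ball $C = B_{\widehat{V}}(i,1/2)\setminus\{i\}$, two cases arise depending on the outcome of the average-distance check. In the \emph{merged} case $B = \{i\}\cup C$, the $w^-_\ell$ disagreements incurred inside $B$ are amortized against the LP mass $w^-_\ell(u,v)(1-x^*_{uv})$ using the triangle inequality $x^*_{uv}\leq x^*_{iu}+x^*_{iv}$ for $u,v\in C$ together with the average-distance condition $\frac{1}{|C|}\sum_{j\in C}x^*_{ij}<1/4$, which forces the average of $1-x^*_{uv}$ within $B$ to be bounded away from $0$. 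In the \emph{singleton} case $B=\{i\}$, the $w^+_\ell$ disagreements at $i$ are charged against the LP mass: for $j \in \widehat{V}\setminus(C\cup\{i\})$, the pointwise bound $x^*_{ij}\geq 1/2$ suffices, whereas for $j \in C$ the failure of the check, $\frac{1}{|C|}\sum_{j\in C}x^*_{ij}\geq 1/4$, supplies enough aggregate LP mass on pairs incident to $i$. A careful accounting of the constants then yields the factor $4$, exactly as in~\cite{Charikar+05}.

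Granted the per-layer lemma, the theorem follows from the monotonicity of the $\ell_p$-norm in its nonnegative arguments. For $p<\infty$,
\begin{align*}
\|\mathbf{Disagree}(\mathcal{B})\|_p\;\leq\;4\left(\sum_{\ell\in [L]}\Bigl(\sum_{\{u,v\}\in E}\bigl(w^+_\ell(u,v)\,x^*_{uv}+w^-_\ell(u,v)(1-x^*_{uv})\bigr)\Bigr)^{p}\right)^{\!1/p}=4\,\mathrm{OPT}_\textrm{CV-Pr}\leq 4\,\mathrm{OPT},
\end{align*}
and for $p=\infty$, $\max_\ell\Disagree_\ell(\mathcal{B})\leq 4\max_\ell \Phi_\ell(\bm{x}^*) =4\,\mathrm{OPT}_\textrm{LP-Pr}\leq 4\,\mathrm{OPT}$, where $\Phi_\ell(\bm{x}^*)$ denotes the layer-$\ell$ fractional cost at $\bm{x}^*$. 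Polynomial running time follows from the discussion of (CV-Pr) and (LP-Pr) in Section~\ref{sec:general}. The principal obstacle is the per-layer lemma itself: whereas in the unweighted setting of~\cite{Charikar+05} each pair contributes to exactly one of the two LP terms, in the probability-constraint setting both $w^+_\ell(u,v)$ and $w^-_\ell(u,v)$ may be simultaneously positive, so the charging scheme must allocate costs across the `$+$' and `$-$' contributions. Since the bound is linear in the weights, however, Charikar et al.'s charging extends by treating the `$+$' and `$-$' masses separately, and the remaining technical work is a careful bookkeeping that verifies the constants still combine to $4$.
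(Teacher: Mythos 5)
Your high-level strategy is the same as the paper's: establish the per-layer bound $\Disagree_\ell(\mathcal{B})\leq 4\sum_{\{u,v\}\in E}\bigl(w^+_\ell(u,v)x^*_{uv}+w^-_\ell(u,v)(1-x^*_{uv})\bigr)$ for every $\ell$, then conclude by monotonicity of the $\ell_p$-norm that $\|\mathbf{Disagree}(\mathcal{B})\|_p\leq 4\,\mathrm{OPT}_{\textrm{CV-Pr}}$ (resp.\ $4\,\mathrm{OPT}_{\textrm{LP-Pr}}$), and your treatment of the singleton case matches the paper's. The gaps are in the merged case $B=\{i\}\cup C$, and they are genuine. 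First, you bound only the `$-$' cost of pairs inside $B$, but a merged cluster also pays the `$+$' cost $\sum_{\{j,k\}\in E:\,j\in B\wedge k\in\widehat{V}\setminus B}w^+_\ell(j,k)$ of the pairs it cuts, and such a pair can have $x^*_{jk}$ arbitrarily close to $0$ (take $x^*_{ij}$ just below $1/2$ and $x^*_{ik}$ just above), so it cannot be charged pointwise; the paper needs a dedicated argument for the regime $x^*_{ik}\in(1/2,3/4)$, and there the `$-$' LP mass of other pairs incident to $k$ is essential to cover the `$+$' cost.

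Second, your proposed mechanism for the within-$B$ pairs --- that the average-distance check forces the average of $1-x^*_{uv}$ over $u,v\in B$ to be bounded away from $0$ --- does not yield the bound once the weights are heterogeneous: the `$-$' mass can sit precisely on pairs with both endpoints near radius $1/2$, where $1-x^*_{uv}$, and hence the per-pair LP credit, is nearly $0$, while an average bound gives no pair-by-pair guarantee. The paper's (and Charikar et al.'s) fix is to handle pairs with both radii below $3/8$ pointwise (triangle inequality gives $1-x^*_{jk}\geq 1/4$) and, for each $k$ with $x^*_{ik}\in[3/8,1/2]$, to charge the cost $n_k=\sum_{j<k}w^-_\ell(j,k)$ against the \emph{combined} LP mass $p_kx^*_{ik}+n_k(1-x^*_{ik})-(p_k+n_k)/4$ of all pairs $\{j,k\}$ with $j<k$, where $p_k=\sum_{j<k}w^+_\ell(j,k)$ and the subtracted term comes from bounding $\sum_{j<k}x^*_{ij}$ via the average condition and the probability constraint; one then checks this linear function of $x^*_{ik}$ stays above $n_k/4$ on $[3/8,1/2]$. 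This joint accounting is unavoidable because the correction term $\sum_{j<k}x^*_{ij}\leq(p_k+n_k)/4$ cannot be apportioned between the `$+$' and `$-$' masses, so your closing suggestion that the two masses can be treated separately points in the wrong direction.
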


The above theorem indicates that the $4$-approximation algorithm for the unweighted \textsc{MinDisagree} \citep{Charikar+05} can be extended to the probability-constraint case, 
which has yet to be mentioned before. 
Although some approximation ratios better than $4$ are known for the unweighted \textsc{MinDisagree}, 
thanks to its simplicity and extendability, the algorithm has been generalized to various settings in the unweighted case (see Section~\ref{sec:related}). 
Our analysis implies that those results may be further generalized from the unweighted case to the probability-constraint case.

\section{EXPERIMENTAL EVALUATION}\label{sec:exp}

In this section, we report the results of computational experiments performed on various real-world datasets. 
We discuss only Problem~\ref{prob:general} in the general weighted case in the main paper. 
For Problem~\ref{prob:general} with the probability constraint, see Appendices~\ref{appx:exp_setup} and~\ref{appx:exp_result}. 

\subsection{Experimental Setup}

\noindent
\textbf{Datasets.}
Throughout the experiments, we set $p=\infty$ in Problem~\ref{prob:general}, 
meaning that we aim to minimize the maximal disagreements over all layers. 
This is an important case of particular interest to us, where the objective is quite intuitive and easy to interpret.
Table~\ref{tab:res_general} lists real-world datasets, 
each of which is a multilayer network consisting of $L$ layers with positive edge weights, 
collected by Network Repository~\citep{netrepo} licensed under a Creative Commons Attribution-ShareAlike License. 
Using the datasets, we generated our instances of Problem~\ref{prob:general}. 
Let $G=(V,(E_\ell,w_\ell)_{\ell\in [L]})$ be a multilayer network at hand, 
where $E_\ell$ is the set of edges on layer $\ell$ and $w_\ell \colon E_\ell \rightarrow \mathbb{R}_{>0}$ is its weight function. 
We first normalize all edge weights so that the maximum weight over layers is equal to $1$; that is, 
we redefine $w_\ell(\{u,v\})\leftarrow w_\ell(\{u,v\})/w_\mathrm{max}$ for every $\ell\in [L]$ and $\{u,v\}\in E_\ell$, 
where $w_\mathrm{max}=\max_{\ell\in [L]}\max_{\{u,v\}\in E_\ell}w_\ell(\{u,v\})$. 
For every $\ell\in [L]$, let $\texttt{weights}(\ell)$ be the multiset of all edge weights on layer $\ell$, 
i.e., $\texttt{weights}(\ell)=\{w_\ell(\{u,v\}): \{u,v\}\in E_\ell\}$. 
We generate our instance $V$ and $(w^+_\ell,w^-_\ell)_{\ell\in [L]}$ as follows: 
The set $V$ of objects is exactly the same as the set of nodes in the multilayer network. 
For convenience, we define $E=\{\{u,v\}: u,v\in V,\, u\neq v\}$. 
For each layer $\ell\in [L]$ and $\{u,v\}\in E$, 
if $\{u,v\}\in E_\ell$ we set $w^+_\ell(u,v)=w_\ell(\{u,v\})$ and $w^-_\ell(u,v)=0$; 
otherwise we set $w^+_\ell(u,v)=0$ and $w^-_\ell(u,v)=\text{Uniform}(\texttt{weights}(\ell))$ with probability $0.5$, 
where $\text{Uniform}()$ takes an element from a given multiset uniformly at random, 
and $w^+_\ell(u,v)=w^-_\ell(u,v)=0$ otherwise. 
The intuition behind the above setting is that we actively put `$+$' labels for the pairs of objects having edges in the original network, 
while for the pairs of objects not having edges, we only passively put `$-$' labels (i.e., only with probability $0.5$), 
given the potential missing of edges in the original network. 
The weights for `$+$' labels fully respect the original edge weights, while the weights for `$-$' labels are generated from those for `$+$' labels. 

\begin{table*}
\centering
\caption{Real-world datasets and experimental results for Problem~\ref{prob:general} in the general weighted case.}\label{tab:res_general}
\smallskip
\scalebox{0.89}{
\begin{tabular}{lrrrrrrrrrrrrr}
\toprule 
&&&&&\multicolumn{2}{c}{Algorithm~\ref{alg:general}} & &\multicolumn{2}{c}{\textsf{Pick-a-Best}} & &\multicolumn{2}{c}{\textsf{Aggregate}}\\
\cline{6-7} \cline{9-10} \cline{12-13}
Dataset  &$|V|$ &$L$ &LB &&Obj. val. &Time(s)  &&Obj. val. &Time(s) &&Obj. val. &Time(s) \\
\midrule
\texttt{aves-sparrow-social}     &52 &2   &13.37  &&\textbf{13.48} &0.47  &&26.79 &0.34  &&13.81 &\textbf{0.11} \\
\texttt{insecta-ant-colony1}         &113 &41 &32.48  &&\textbf{34.30} &587.94  &&42.94 &1719.11  &&47.59 &\textbf{48.03} \\
\texttt{reptilia-tortoise-bsv}    &136 &4  &127.14 &&\textbf{151.00} &2.32  &&193.00 &16.43  &&174.00  &\textbf{0.91} \\
\texttt{aves-wildbird-network}    &202 &6  &54.97  &&\textbf{56.50} &35.78  &&98.27 &129.20  &&74.84 &\textbf{7.87} \\
\texttt{aves-weaver-social}      &445 &23 &132.75  &&\textbf{164.00} &135.22  &&--- &OT  &&177.00     &\textbf{12.19} \\
\texttt{reptilia-tortoise-fi} &787 &9  &271.48 &&\textbf{305.00} &644.07 &&--- &OT  &&446.00  &\textbf{195.40} \\
\bottomrule
\end{tabular}
}
\end{table*}

\smallskip
\noindent
\textbf{Our algorithms and baselines.}
In Algorithm~\ref{alg:general}, the way to select a pivot is arbitrary; in our implementation, the algorithm just takes the object with the smallest ID. 
We employ the following two baseline methods: (i) \textsf{Pick-a-Best}: This method first solves \textsc{MinDisagree} on each layer, 
using the state-of-the-art $O(\log n)$-approximation algorithms~\citep{Charikar+05,Demaine+06}, 
and then outputs the best one among them in terms of the objective value of Problem~\ref{prob:general}. 
This method can be seen as a generalization of Algorithm~\ref{alg:general_metric} for Problem~\ref{prob:general} with the probability-constraint case, 
but it is not clear if the method has an approximation ratio such as $O(L\log n)$, achieved by Algorithm~\ref{alg:general}. 
(ii) \textsf{Aggregate}: This method first aggregates the layers. 
Specifically, the method constructs $w^+\colon E\rightarrow \mathbb{R}_{\geq 0}$ and $w^-\colon E\rightarrow \mathbb{R}_{\geq 0}$ by setting 
$w^+(u,v)=\sum_{\ell\in [L]}w^+_\ell(u,v)$ and $w^-(u,v)=\sum_{\ell\in [L]}w^-_\ell(u,v)$ for every $\{u,v\}\in E$. 
Then it solves \textsc{MinDisagree} with input $V$ and $(w^+,w^-)$, using the $O(\log n)$-approximation algorithms~\citep{Charikar+05,Demaine+06}. 
As mentioned in Section~\ref{sec:problem}, this method gives an $O(\log n)$-approximate solution for Problem~\ref{prob:general} when $p=1$, 
but the approximation ratio for the case of $p=\infty$ is not clear. 

Finally we mention the implementation of the LPs. 
All LPs here have the $\Theta(n^3)$ triangle inequality constraints; 
thus, it is inefficient to input the entire program directly. 
To overcome this, we employed Row Generation technique~\citep{Groetschel+89}.
Specifically, we first solve the program without any triangle inequality constraint. 
Then we scan all the constraints: If there are constraints violated by the current solution, we add the constraints to the program, solve it again, and repeat the process; otherwise we output the current solution, which is guaranteed to be optimal to the original program. 

\smallskip
\noindent
\textbf{Machine spec and code.}
We used a machine with Apple M1 Chip and 16~GB RAM. 
All codes were written in Python~3. 
LPs were solved using Gurobi Optimizer~11.0.1 with the default parameters.

\subsection{Baseline Selection Criteria}\label{appx:baseline}

The reasonableness of our baseline methods is self-evident.
Therefore, we clarify here why we do not employ existing algorithms for certain related problems.
Specifically, we examine the algorithms proposed by \citet{Gionis+07}, \citet{Bonchi+15}, and \citet{Chen+24} for Clustering Aggregation (CA), Multi-Chromatic Correlation Clustering (MCCC), and Multi-Layer Cluster Editing (MLCE), respectively. 

\citet{Gionis+07} proposed five algorithms for CA: \textsf{BestClustering}, \textsf{Balls}, \textsf{Agglomerative}, \textsf{Furthest}, and \textsf{LocalSearch}.
\textsf{BestClustering} simply selects the best one from the given clusterings. This is not applicable to our setting, as each layer in our problem does not necessarily represent a clustering (i.e., a feasible solution). Notably, one of our baselines, \textsf{Pick-a-Best}, and our algorithm, Algorithm~\ref{alg:general_metric}, can be seen as natural adaptations of this method. They first compute an approximately closest clustering to each layer and then select the best one.

The remaining four algorithms require a (pseudo)metric over the elements. In CA, such a metric can be easily defined by counting the number of given clusterings in which a pair $\{u,v\}$ is assigned to different clusters. This construction, however, is not applicable to our problem, since the layers do not necessarily correspond to clusterings. Although this metric construction can be extended in a very specific case, namely when both the probability constraint and triangle inequality constraint are satisfied, our instances do not fall into this category.
As a workaround, one may solve a convex programming relaxation of our problem and use its solution as a pseudometric. However, the resulting algorithms are unlikely to outperform ours. Our experimental results show that the objective values achieved by our algorithms are often very close to the lower bound on the optimum, indicating near-optimal performance. Consequently, there is little room for further improvement in solution quality. While a potential speedup might be expected, this too seems unlikely, since the relaxation itself, now required by these algorithms, is also the most computationally expensive part of our algorithms.

MCCC does not account for weights, and its generalization to the weighted case is nontrivial. Therefore, existing algorithms for MCCC are not applicable to our problem with the general weighted case and the probability-constraint case. Even in the unweighted case, MCCC minimizes a different objective that neither considers layer-wise disagreements nor assumes that clusters span all colors (i.e., layers). As a result, the algorithms are unlikely to perform well under our objective. Furthermore, the near-optimality of our algorithms again leaves little room for improvement.

Regarding MLCE, note that the case $d=0$ corresponds to the decision version of our problem in the unweighted case with $p=\infty$. Thus, MLCE algorithms (together with binary search over the objective value) can serve as exact baselines for this special case. However, our experiments focus on the general weighted case and the probability-constraint case, where MLCE algorithms are not applicable. Moreover, the algorithm of \citet{Chen+24} has time complexity $k^{O(k+d)}Ln^3$, where $k$ is the optimal objective value, which is not small in practice. The algorithm is therefore not practical, and hence, even if we perform experiments for our problem in the unweighted case, their algorithm would not be a feasible baseline.

\subsection{Results}
Table~\ref{tab:res_general} presents the results. 
For each instance, the best objective value and running time among the algorithms are written in bold. 
The column LB presents $\mathrm{OPT}_\mathrm{LP}$, i.e., the optimal value of (LP), 
which is a lower bound on the optimal value of Problem~\ref{prob:general}. 
OT indicates that the algorithm did not terminate in 3,600 seconds. 
As can be seen, Algorithm~\ref{alg:general} outperforms the baseline methods in terms of solution quality. 
Indeed, Algorithm~\ref{alg:general} obtains much better solutions than those computed by \textsf{Pick-a-Best} and \textsf{Aggregate}. 
Remarkably, the objective value achieved by Algorithm~\ref{alg:general} is often quite close to the lower bound $\mathrm{OPT}_\mathrm{LP}$, 
meaning that the algorithm tends to obtain a near-optimal solution. 
As Algorithm~\ref{alg:general} solves (LP), which involves the multilayer structure and thus is more complex than the LP solved in \textsf{Aggregate}, 
Algorithm~\ref{alg:general} is slower than \textsf{Aggregate}; 
however, Algorithm~\ref{alg:general} is still even faster than \textsf{Pick-a-Best}, 
as the latter solves $L$ different LPs corresponding to the layers.

\section{CONCLUSIONS}\label{sec:conclusion}

We have introduced Multilayer Correlation Clustering 
and designed approximation algorithms. 
As a final remark, we discuss the limitations of our work, based on which we mention several interesting open problems. 
In theory, it is still not clear how harder Multilayer Correlation Clustering is to approximate compared with \textsc{MinDisagree}. 
Given this situation, we believe that the most promising direction is to fill the gap: Improve the approximation ratios achieved by our proposed algorithms and/or proving some hardness of approximation for our problem (beyond that for \textsc{MinDisagree}). 
One of the reasonable questions is ``to what extent can we reduce the term $L$ in the current approximation ratio of $O(L\log n)$ of Algorithm~\ref{alg:general}?''
In practice, our algorithms that solve LPs do not scale to large instances. 
Therefore, it is also interesting to investigate fast algorithms 
even without approximation ratios. 
For the detailed descriptions of open problems, see Appendix~\ref{appdx:conclusion_sub}.

\section*{Acknowledgments}
This research is supported by the Academy of Finland project MALSOME (343045) and by the Helsinki Institute for Information Technology (HIIT).

\bibliographystyle{icml2025}
\bibliography{main}

\section*{Checklist}

%

\begin{enumerate}

  \item For all models and algorithms presented, check if you include:
  \begin{enumerate}
    \item A clear description of the mathematical setting, assumptions, algorithm, and/or model. [Yes. See Sections~\ref{sec:problem}, \ref{sec:general}, and \ref{sec:probability}.]
    \item An analysis of the properties and complexity (time, space, sample size) of any algorithm. [Yes. See Sections~\ref{sec:general} and \ref{sec:probability}.]
    \item (Optional) Anonymized source code, with specification of all dependencies, including external libraries. [No, we do not include the code.]
  \end{enumerate}

  \item For any theoretical claim, check if you include:
  \begin{enumerate}
    \item Statements of the full set of assumptions of all theoretical results. [Yes. See Sections~\ref{sec:problem}, \ref{sec:general}, and \ref{sec:probability}.]
    \item Complete proofs of all theoretical results. [Yes. See Sections~\ref{sec:problem}, \ref{sec:general}, and \ref{sec:probability}, and Appendices~\ref{appdx:general} and \ref{appdx:probability}.]
    \item Clear explanations of any assumptions. [Yes. See Sections~\ref{sec:problem}, \ref{sec:general}, and \ref{sec:probability}.]     
  \end{enumerate}

  \item For all figures and tables that present empirical results, check if you include:
  \begin{enumerate}
    \item The code, data, and instructions needed to reproduce the main experimental results (either in the supplemental material or as a URL). [No, we do not include the code and data. However, the information required to reproduce the main experimental results is included. See Section~\ref{sec:exp} and Appendix~\ref{appdx:exp}.]
    \item All the training details (e.g., data splits, hyperparameters, how they were chosen). [Not Applicable.]
    \item A clear definition of the specific measure or statistics and error bars (e.g., with respect to the random seed after running experiments multiple times). [Not Applicable.]
    \item A description of the computing infrastructure used. (e.g., type of GPUs, internal cluster, or cloud provider). [Yes. See Section~\ref{sec:exp}.]
  \end{enumerate}

  \item If you are using existing assets (e.g., code, data, models) or curating/releasing new assets, check if you include:
  \begin{enumerate}
    \item Citations of the creator If your work uses existing assets. [Yes. See Section~\ref{sec:exp}.]
    \item The license information of the assets, if applicable. [Yes. See Section~\ref{sec:exp}.]
    \item New assets either in the supplemental material or as a URL, if applicable. [Not Applicable.]
    \item Information about consent from data providers/curators. [Not Applicable.]
    \item Discussion of sensible content if applicable, e.g., personally identifiable information or offensive content. [Not Applicable.]
  \end{enumerate}

  \item If you used crowdsourcing or conducted research with human subjects, check if you include:
  \begin{enumerate}
    \item The full text of instructions given to participants and screenshots. [Not Applicable.]
    \item Descriptions of potential participant risks, with links to Institutional Review Board (IRB) approvals if applicable. [Not Applicable.]
    \item The estimated hourly wage paid to participants and the total amount spent on participant compensation. [Not Applicable.]
  \end{enumerate}

\end{enumerate}

\clearpage
\appendix
\thispagestyle{empty}

\onecolumn
\aistatstitle{Multilayer Correlation Clustering: \\
Supplementary Materials}

\appendix

\section{OMITTED CONTENTS IN SECTION~\ref{sec:related}}

\subsection{Details of Related Work}\label{appdx:related}
Multilayer Correlation Clustering can be seen as Correlation Clustering with \emph{fairness} considerations.
Indeed, supposing that the similarity information of each layer is given by an agent (e.g., a crowd worker),
we see that the problem tries not to abandon any similarity information given by the agents.
From a fairness perspective, \citet{Puleo+16,Puleo+18} initiated the study of local objectives for the unweighted \textsc{MinDisagree}.
In this model, the disagreements of a clustering are quantified locally rather than globally, at the level of single elements.
Specifically, they considered a disagreements vector (with dimension equal to the number of elements),
where $i$-th element represents the disagreements incident to the corresponding element $i\in V$.
The goal is then to minimize the $\ell_p$-norm ($p\geq 1$) of the disagreements vector.
If we set $p=1$, the problem reduces to the unweighted \textsc{MinDisagree}, whereas if we set $p=\infty$, the problem aims to minimize the maximal disagreements over the elements.
The authors proved that the model with $p=\infty$ is NP-hard
and designed a $48$-approximation algorithm for any $p\geq 1$
by extending the $4$-approximation algorithm for the unweighted \textsc{MinDisagree}, designed by \citet{Charikar+05}.
\citet{Charikar+17} then improved the approximation ratio to $7$ by inventing a different rounding algorithm.
The contribution of \citet{Charikar+17} is not limited to the unweighted case;
they also studied the above model with $p=\infty$ in the general weighted case and designed an $O(\sqrt{n})$-approximation algorithm.
Later \citet{Kalhan+19} improved the above approximation ratio of $7$ to $5$,
and designed an $O(n^{\frac{1}{2}-\frac{1}{2p}}\log^{\frac{1}{2}+\frac{1}{2p}}n)$-approximation algorithm for any $p\geq 1$ in the general weighted case, matching the current-best approximation ratio of $O(\log n)$ for \textsc{MinDisagree} in the general weighted case (i.e., the above model with $p=1$)~\citep{Charikar+05,Demaine+06}.
\citet{Davies+23} gave a purely-combinatorial $O(n^\omega)$-time $40$-approximation algorithm for $p=\infty$ in the unweighted case, where $\omega$ is the exponent of matrix multiplication,
while \citet{Heidrich+24} improved the above approximation ratio of $5$ by \citet{Kalhan+19} to $4$ for $p=\infty$.
Very recently, \citet{Davies+24} designed a combinatorial algorithm running in $O(n^\omega)$ time and outputting a clustering that is a constant-factor approximate solution for all $\ell_p$-norms simultaneously. 
\citet{Ahmadi+19} studied the cluster-wise counterpart of the above model with $p=\infty$ (in the general weighted case),
where the goal is to find a clustering of $V$ that minimizes the maximal disagreements over the clusters.
The authors presented an $O(\log n)$-approximation algorithm together with an $O(r^2)$-approximation algorithm for the $K_{r,r}$-free graphs.
Later \citet{Kalhan+19} significantly improved these approximation ratios to $2+\epsilon$ for any $\epsilon>0$.

Another type of fairness has been considered for Correlation Clustering.
\citet{Ahmadian+20} initiated the study of Fair Correlation Clustering (in the unweighted case),
where each element is associated with a color, and each cluster of the output is required to be not over-represented by any color,
meaning that the fraction of elements with any single color has to be upper bounded by a specified value.
For the model, the authors designed a $256$-approximation algorithm, based on the notion called fairlet decomposition.
\citet{Ahmadi+20} independently studied a similar model of Fair Correlation Clustering,
where the distribution of colors in each cluster has to be the same as that of the entire set.
In particular, for the case of two colors that have the same number of elements in the entire set,
the authors proposed a $(3\alpha+4)$-approximation algorithm, where $\alpha$ is any known approximation ratio for the unweighted \textsc{MinDisagree}.
\citet{Friggstad+21} then gave an approximation ratio of $6.18$, which cannot be achieved by the above $3\alpha+4$.
The authors also studied the model with the aforementioned local objective for $p=\infty$ and designed a constant-factor approximation algorithm.
\citet{Schwartz+22} proved that the model of \citet{Ahmadi+20} in the general weighted case has no finite approximation ratio, unless $\text{P}=\text{NP}$.
Very recently, \citet{Ahmadian+23} substantially generalized the above models and designed an approximation algorithm
that has constant-factor approximation ratios for some useful special cases.

Multilayer Correlation Clustering can also be seen as Correlation Clustering with the \emph{uncertainty} of input
by interpreting each layer as a possible scenario of the similarity information of the elements.
Most works on Correlation Clustering with uncertainty assume the existence of the ground-truth clustering of $V$ and aim to recover it, based only on its noisy observations.
In the seminal paper by \citet{Bansal+04}, this type of problem had already been considered,
while \citet{Joachims+05} gave the first formal analysis of the problem.
Later, a variety of problem settings have been introduced in a series of works~\citep{Aronsson+24,Chen+14,Gullo+23,Makarychev+15,Mathieu+10,Silwal+23}.
Very recently, \citet{Kuroki+24} considered another type of problem,
which aims to perform as few queries as possible to an oracle that returns a noisy sample of the similarity between two elements in $V$,
to obtain a clustering of $V$ that minimizes the disagreements.
Specifically, they introduced two novel online-learning problems rooted in the paradigm of combinatorial multi-armed bandits,
and designed algorithms that combine \textsc{KwikCluster} with adaptive sampling strategies.

Many network-analysis tasks have recently been generalized to multilayer networks.
Examples include community detection~\citep{Bazzi+16,DeBacco+17,Interdonato+17,Tagarelli+17}, dense subgraph discovery~\citep{Galimberti+20,Jethava+15,Kawase+23},
link prediction~\citep{DeBacco+17,Jalili+17}, analyzing spreading processes~\citep{DeDomenico+16,Salehi+15}, and identifying central nodes~\citep{Basaras+19,DeDomenico+15}.

\section{OMITTED CONTENTS IN SECTION~\ref{sec:general}}\label{appdx:general}

\subsection{Proof of Lemma~\ref{lem:key_weighted}}

\begin{proof}
Fix $t\in \{1,\dots, |\mathcal{B}|\}$. 
For simplicity, for any $r\in [0,1/c]$, 
we write $B_{V^{(t)}}(i^{(t)},r)=B(r)$, 
and moreover, for any $\ell\in [L]$, 
$\text{cut}_{(V^{(t)},\ell)}(B_{V^{(t)}}(i^{(t)},r))=\text{cut}_\ell(r)$ 
and $\text{vol}_{(V^{(t)},\ell)}(B_{V^{(t)}}(i^{(t)},r))=\text{vol}_\ell(r)$. 
By the definition of $r^*_{(t)}$, it suffices to show that there exists $r\in (0,1/c]$ that satisfies 
\begin{align*}
\max_{\ell\in [L]:\,F_\ell\neq 0}\frac{\mathrm{cut}_\ell(r)}{\mathrm{vol}_\ell(r)}\leq cL\log(n+1). 
\end{align*}
Suppose, for contradiction, that for any $r\in (0,1/c]$, 
\begin{align*}
\max_{\ell\in [L]:\,F_\ell\neq 0}\frac{\mathrm{cut}_\ell(r)}{\mathrm{vol}_\ell(r)}> cL\log(n+1). 
\end{align*}
Then we have 
\begin{align}\label{ineq:assumption}
\int_0^{1/c}\max_{\ell\in [L]:\,F_\ell\neq 0}\frac{\mathrm{cut}_\ell(r)}{\mathrm{vol}_\ell(r)}\,\mathrm{d}r
>\int_0^{1/c}cL\log(n+1)\,\mathrm{d}r
=L\log(n+1). 
\end{align}
Now relabel the elements in $V^{(t)}$ that have distance less than $1/c$ from $i^{(t)}$ (including $i^{(t)}$ itself) 
as $i^{(t)}=j_0,\dots, j_{q-1}$ in the increasing order of the distance. 
For each $p=0,\dots, q-1$, we denote by $r_p$ the distance from $i^{(t)}$ to $j_{p}$, i.e., $r_p=x^*_{i^{(t)}j_p}$. 
For convenience, we set $r_q=1/c$. 
For any $\ell\in [L]$, the function $\text{vol}_\ell(r)$ is not necessarily differentiable and even not necessarily continuous at $r_0,\dots, r_q$. 
On the other hand, at any point $r\in (0,1/c]$ except for $r_1,\dots, r_q$, the function $\text{vol}_\ell(r)$ is differentiable, 
and from the definition, we have 
\begin{align}\label{eq:d}
\frac{\text{d}\,\text{vol}_\ell(r)}{\text{d}r}=\text{cut}_\ell(r). 
\end{align}
Moreover, by simple calculation, we have that for any $\ell\in [L]$ with $F_\ell\neq 0$, 
\begin{align}\label{ineq:vol_ub}
\frac{\text{vol}_\ell(1/c)}{\text{vol}_\ell(0)}\leq n+1. 
\end{align}
Indeed, we see that $\text{vol}_\ell(0)=F_\ell/n$ and 
\begin{align*}
\text{vol}_\ell(1/c)
&=\frac{F_\ell}{n}+\sum_{\{j,k\}\in E^+_\ell:\,j,k\in B(1/c)}\hspace{-6mm}w_\ell(j,k)x^*_{jk}
+\sum_{\{j,k\}\in E^+_\ell:\,j\in B(1/c)\land k\in V^{(t)}\setminus B(1/c)}\hspace{-8mm}w_\ell(j,k)\left(\frac{1}{c}-x^*_{i^{(t)}j}\right)\\
&\leq \frac{F_\ell}{n}+\sum_{\{j,k\}\in E^+_\ell:\,j\in B(1/c)\land k\in V^{(t)}}w_\ell(j,k)x^*_{jk} \\
&\leq \frac{F_\ell}{n}+F_\ell,
\end{align*}
where the first inequality follows from 
\begin{align}\label{ineq:useful}
1/c-x^*_{i^{(t)}j}\leq x^*_{i^{(t)}k}-x^*_{i^{(t)}j}\leq x^*_{i^{(t)}j}+x^*_{jk}-x^*_{i^{(t)}j}=x^*_{jk}
\end{align}
for any $\{j,k\}\in E^+_\ell$ such that $j\in B(1/c)$ and $k\in V^{(t)}\setminus B(1/c)$.  
Using Equality~\eqref{eq:d} and Inequality~\eqref{ineq:vol_ub}, we have 
\begin{align*}
\int_0^{1/c}\max_{\ell\in [L]:\,F_\ell\neq 0}\frac{\mathrm{cut}_\ell(r)}{\mathrm{vol}_\ell(r)}\,\mathrm{d}r
&\leq \sum_{\ell\in [L]:\,F_\ell\neq 0}\int_0^{1/c}\frac{\mathrm{cut}_\ell(r)}{\mathrm{vol}_\ell(r)}\,\mathrm{d}r\\
&= \sum_{\ell\in [L]:\,F_\ell\neq 0}\sum_{p=0}^{q-1}\int_{r_p}^{r_{p+1}}\frac{\mathrm{cut}_\ell(r)}{\mathrm{vol}_\ell(r)}\,\mathrm{d}r\\
&= \sum_{\ell\in [L]:\,F_\ell\neq 0}\sum_{p=0}^{q-1}\int_{r_p}^{r_{p+1}}\frac{1}{\mathrm{vol}_\ell(r)}\,\mathrm{d}\,\text{vol}_\ell(r)\\
&= \sum_{\ell\in [L]:\,F_\ell\neq 0}\sum_{p=0}^{q-1}(\log\text{vol}_\ell(r_{p+1})-\log\text{vol}_\ell(r_p))\\
&= \sum_{\ell\in [L]:\,F_\ell\neq 0}\log\frac{\text{vol}_\ell(1/c)}{\text{vol}_\ell(0)}\\
&\leq L\log(n+1), 
\end{align*}
where the first inequality follows from the fact that $\frac{\mathrm{cut}_\ell(r)}{\mathrm{vol}_\ell(r)}$ is nonnegative for any $\ell\in [L]$ with $F_\ell\neq 0$ and $r\in (0,1/c]$. 
The above contradicts Inequality~\eqref{ineq:assumption}, meaning that there exists $r\in (0,1/c]$ such that 
\begin{align*}
\max_{\ell\in [L]:\,F_\ell\neq 0}\frac{\mathrm{cut}_\ell(r)}{\mathrm{vol}_\ell(r)}\leq cL\log(n+1). 
\end{align*}

From now on, we show that $B(r^*_{(t)})=B_{V^{(t)}}(i^{(t)},r^*_{(t)})$ can be computed in $O(Ln^2)$ time. 
To this end, it suffices to show that 
the radius $r^*_{(t)}\in \argmin\left\{\max_{\ell\in [L]:\,F_\ell\neq 0}\frac{\text{cut}_\ell(r)}{\text{vol}_\ell(r)}: r\in (0,1/c]\right\}$ can be computed in $O(Ln^2)$ time. 
Recall the relabeling of the elements in $V^{(t)}$. 
For any $p=0,\dots,q-1$, in the interval $(r_p,r_{p+1}]$, 
the function $\frac{\text{cut}_\ell(r)}{\text{vol}_\ell(r)}$ for any $\ell\in [L]$ is monotonically nonincreasing, 
and thus so is $\max_{\ell\in [L]}\frac{\text{cut}_\ell(r)}{\text{vol}_\ell(r)}$. 
Indeed, in that interval, $\text{cut}_\ell(r)$ is unchanged, while $\text{vol}_\ell(r)$ is monotonically nondecreasing. 
Therefore, it suffices to compute $\max_{\ell\in [L]}\frac{\text{cut}_\ell(r)}{\text{vol}_\ell(r)}$ for all $r=r_1,\dots, r_q$ 
and identify the one that attains the minimum. 
For each $\ell\in [L]$, we can compute $\text{cut}_\ell(r)$ for all $r=r_1,\dots, r_q$ in $O(n^2)$ time by iteratively moving the corresponding element and its incident edges. 
We can also compute $\text{vol}_\ell(r)$ for all $r=r_1,\dots, r_q$ in $O(n^2)$ time in a similar way. 
Performing these operations for all layers and computing the desired radius that attains the minimum requires $O(Ln^2)$ time. 
\end{proof}

\subsection{Proof of Theorem~\ref{lem:key_weighted}}
\begin{proof}
By Lemma~\ref{lem:key_weighted}, each iteration of the region growing part (i.e., the while-loop) of Algorithm~\ref{alg:general} can be performed in $O(Ln^2)$ time. 
As each iteration removes at least one element from the current set, the number of iterations is upper bounded by $n$. 
Therefore, we can obtain the time complexity presented in the theorem. 

In what follows, we analyze the approximation ratio. 
Letting $\mathcal{B}$ be the output of the algorithm, we need to evaluate 
\begin{align*}
&\|\Disagree_\ell(\mathcal{B})\|_p\\
&=
\begin{cases}
\displaystyle \left(\sum_{\ell\in [L]}\left(\sum_{\{j,k\}\in E^+_\ell}w_\ell(j,k)\1[\mathcal{B}(j)\neq \mathcal{B}(k)]+\sum_{\{j,k\}\in E^-_\ell}w_\ell(j,k)\1[\mathcal{B}(j)=\mathcal{B}(k)]\right)^p\right)^{1/p} & \text{if } p<\infty,\\
\displaystyle \max_{\ell\in [L]}\left(\sum_{\{j,k\}\in E^+_\ell}w_\ell(j,k)\1[\mathcal{B}(j)\neq \mathcal{B}(k)]+\sum_{\{j,k\}\in E^-_\ell}w_\ell(j,k)\1[\mathcal{B}(j)=\mathcal{B}(k)]\right) & \text{if } p=\infty.
\end{cases}
\end{align*}

We first evaluate the terms for `$+$' labels. 
By Lemma~\ref{lem:key_weighted}, we have that for any $\ell\in [L]$ with $F_\ell\neq 0$, 
\begin{align*}
\mathrm{cut}_{(V^{(t)},\ell)}(B_{V^{(t)}}(i^{(t)},r^*_{(t)}))\leq cL\log(n+1)\cdot \mathrm{vol}_{(V^{(t)},\ell)}(B_{V^{(t)}}(i^{(t)},r^*_{(t)})). 
\end{align*}
Based on this, for any $\ell\in [L]$ with $F_\ell\neq 0$, we have 
\begin{align}\label{ineq:positive_weighted}
\sum_{\{j,k\}\in E^+_\ell}w_\ell(j,k)\1[\mathcal{B}(j)\neq \mathcal{B}(k)]
&=\sum_{t=1}^{|\mathcal{B}|}\text{cut}_{(V^{(t)},\ell)}(B_{V^{(t)}}(i^{(t)},r^*_{(t)}))\nonumber\\
&\leq cL\log(n+1)\sum_{t=1}^{|\mathcal{B}|}\mathrm{vol}_{(V^{(t)},\ell)}(B_{V^{(t)}}(i^{(t)},r^*_{(t)}))\nonumber\\
&\leq cL\log(n+1)\left(\frac{F_\ell}{n}\cdot |\mathcal{B}| + \sum_{\{j,k\}\in E^+_\ell}w_\ell(j,k)x^*_{jk}\right)\nonumber\\
&\leq 2cL\log(n+1)\sum_{\{j,k\}\in E^+_\ell}w_\ell(j,k)x^*_{jk}.
\end{align}
The second inequality follows from the fact that the balls included in $\mathcal{B}$ are mutually disjoint. 
Indeed, for any $\{j,k\}\in E^+_\ell$ contained in some ball $B_{V^{(t)}}(i^{(t)},r^*_{(t)})$, 
the value $w_\ell(j,k)x^*_{jk}$ is produced just once due to $\mathrm{vol}_{(V^{(t)},\ell)}(B_{V^{(t)}}(i^{(t)},r^*_{(t)}))$, 
while for any $\{j,k\}\in E^+_\ell$ across distinct balls $B_{V^{(t')}}(i^{(t')},r^*_{(t')})$ and $B_{V^{(t'')}}(i^{(t'')},r^*_{(t'')})$ ($t'<t''$), 
once removing $B_{V^{(t')}}(i^{(t')},r^*_{(t')})$, all the incident edges will never appear in the later iterations, 
and thus at most the value $w_\ell(j,k)(1/c-x^*_{i^{(t')}j})$ is produced just once due to $\mathrm{vol}_{(V^{(t')},\ell)}(B_{V^{(t')}}(i^{(t')},r^*_{(t')}))$. 
Note that without loss of generality, we assumed that $B_{V^{(t')}}(i^{(t')},r^*_{(t')})$ contains only $j$ among $j,k$. 
By Inequality~\eqref{ineq:useful}, we have $1/c-x^*_{i^{(t')}j}\leq x^*_{jk}$. 
On the other hand, for any $\ell\in [L]$ with $F_\ell=0$, we see that $x^*_{jk}=0$ for any $\{j,k\}\in E^+_\ell$. 
Therefore, by its design, the algorithm does not separate any $\{j,k\}\in E^+_\ell$, meaning that for any $\ell\in [L]$ with $F_\ell=0$, 
\begin{align}\label{eq:positive_weighted_trivial}
\sum_{\{u,v\}\in E^+_\ell}w_\ell(u,v)\1[\mathcal{B}(u)\neq \mathcal{B}(v)]=0. 
\end{align}

Next we evaluate the terms for `$-$' labels. For any $\ell\in [L]$, we have 
\begin{align}\label{ineq:negative_weighted}
\sum_{\{j,k\}\in E^-_\ell}w_\ell(j,k)\1[\mathcal{B}(j)=\mathcal{B}(k)]
&=\frac{c}{c-2}\sum_{t=1}^{|\mathcal{B}|}\sum_{\{j,k\}\in E^-_\ell:\,j,k\in B_{V^{(t)}}(i^{(t)},r^*_{(t)})}w_\ell(j,k)\left(1-\frac{2}{c}\right)\nonumber\\
&\leq \frac{c}{c-2}\sum_{t=1}^{|\mathcal{B}|}\sum_{\{j,k\}\in E^-_\ell:\,j,k\in B_{V^{(t)}}(i^{(t)},r^*_{(t)})}w_\ell(j,k)\left(1-x^*_{jk}\right)\nonumber\\
&\leq \frac{c}{c-2}\sum_{\{j,k\}\in E^-_\ell}w_\ell(j,k)\left(1-x^*_{jk}\right), 
\end{align}
where the first inequality follows from the triangle inequalities in (CV) and (LP). 
Indeed, denoting by $i^{(t)}$ the center of the ball containing $j,k$, we have $x^*_{jk}\leq x^*_{ji^{(t)}}+x^*_{i^{(t)}k}<2/c$. 

Let $\mathrm{OPT}$ be the optimal value of Problem~\ref{prob:general}. 
Using Inequality~\eqref{ineq:positive_weighted}, Equality~\eqref{eq:positive_weighted_trivial}, and Inequality~\eqref{ineq:negative_weighted}, 
we have that in the case of $p<\infty$, 
\begin{align*}
\|\Disagree_\ell(\mathcal{B})\|_p 
&\leq \left(\sum_{\ell\in [L]}\left(2cL\log(n+1)\sum_{\{j,k\}\in E^+_\ell}w_\ell(j,k)x^*_{jk}
+\frac{c}{c-2}\sum_{\{j,k\}\in E^-_\ell}w_\ell(j,k)\left(1-x^*_{jk}\right)\right)^{p}\right)^{1/p}\\
&\leq \max\left\{2cL\log(n+1),\,\frac{c}{c-2}\right\}\left(\sum_{\ell\in [L]}\left(\sum_{\{j,k\}\in E^+_\ell}\hspace{-3mm}w_\ell(j,k)x^*_{jk}+\sum_{\{j,k\}\in E^-_\ell}\hspace{-3mm}w_\ell(j,k)\left(1-x^*_{jk}\right)\right)^p\right)^{1/p}\\
&= \max\left\{2cL\log(n+1),\,\frac{c}{c-2}\right\}\mathrm{OPT}_\mathrm{CV}\\
&\leq \max\left\{2cL\log(n+1),\,\frac{c}{c-2}\right\}\mathrm{OPT}, 
\end{align*}
and in the case of $p=\infty$, 
\begin{align*}
\|\Disagree_\ell(\mathcal{B})\|_p
&\leq \max_{\ell\in [L]}\left(2cL\log(n+1)\sum_{\{j,k\}\in E^+_\ell}w_\ell(j,k)x^*_{jk}
+\frac{c}{c-2}\sum_{\{j,k\}\in E^-_\ell}w_\ell(j,k)\left(1-x^*_{jk}\right)\right)\\
&\leq \max\left\{2cL\log(n+1),\,\frac{c}{c-2}\right\}\max_{\ell\in [L]}\left(\sum_{\{j,k\}\in E^+_\ell}w_\ell(j,k)x^*_{jk}+\sum_{\{j,k\}\in E^-_\ell}w_\ell(j,k)\left(1-x^*_{jk}\right)\right)\\
&= \max\left\{2cL\log(n+1),\,\frac{c}{c-2}\right\}\mathrm{OPT}_\mathrm{LP}\\
&\leq \max\left\{2cL\log(n+1),\,\frac{c}{c-2}\right\}\mathrm{OPT}. 
\end{align*}
Noting that $\max\left\{2cL\log(n+1),\,\frac{c}{c-2}\right\}=O(L\log n)$, we have the theorem. 
\end{proof}

\section{OMITTED CONTENTS IN SECTION~\ref{sec:probability}}\label{appdx:probability}

\subsection{Pseudocode of the $(\alpha+2)$-Approximation Algorithm for Problem~\ref{prob:general_metric}}\label{appx:pseudo_1}

See Algorithm~\ref{alg:general_metric}. 

\begin{algorithm2e}[t]
\caption{$(\alpha+2)$-approximation for Problem~\ref{prob:general_metric}}\label{alg:general_metric}
\SetKwInput{Input}{Input}
\SetKwInput{Output}{Output}
\Input{$x_1,\dots,x_L\in X$ and $F\subseteq X$}
$Y\leftarrow \emptyset$\;
\textbf{for } $\ell\in [L]$ \textbf{ do } Add to $Y$ an $\alpha$-approximate solution $x'_\ell$ for Problem~\ref{prob:sub} with input $x_\ell\in X$ and $F\subseteq X$\;
\textbf{if } $p<\infty$ \textbf{ return } $\argmin_{x\in Y}\left(\sum_{\ell\in [L]}d(x,x_\ell)^p\!\right)^{\!1/p}$\;
\textbf{if } $p=\infty$ \textbf{ return } $\argmin_{x\in Y}\max_{\ell\in [L]}d(x,x_\ell)$\;
\end{algorithm2e}

\subsection{Proof of Theorem~\ref{thm:probability}}
\begin{proof}
Let $x^*\in F$ be an optimal solution to Problem~\ref{prob:general_metric}. 
Let $x_\mathrm{closest}\in \argmin_{x\in \{x_1,\dots,x_L\}}d(x,x^*)$ 
and $x'_\mathrm{closest}$ be the $\alpha$-approximate solution for Problem~\ref{prob:sub} with input $x_\mathrm{closest}$ and $F$. 
By the definition of $x'_\mathrm{closest}$ and $x_\mathrm{closest}$, we have that for any $\ell\in [L]$, 
\begin{align*}
d(x'_\mathrm{closest},x_\mathrm{closest})\leq \alpha\cdot d(x^*,x_\mathrm{closest})\leq \alpha\cdot d(x^*,x_\ell). 
\end{align*}
Using these inequalities, we have that for any $\ell\in [L]$, 
\begin{align*}
d(x'_\mathrm{closest},x_\ell)
&\leq d(x'_\mathrm{closest},x^*) + d(x^*,x_\ell)\\
&\leq d(x'_\mathrm{closest},x_\mathrm{closest}) + d(x_\mathrm{closest},x^*) + d(x^*,x_\ell)\\
&\leq \alpha\cdot d(x^*,x_\ell)  + d(x^*,x_\ell) + d(x^*,x_\ell)\\
&=(\alpha+2)\cdot d(x^*,x_\ell), 
\end{align*}
where the first and second inequalities follow from the triangle inequality for the metric $d$ and the third inequality follows from the definition of $x_\mathrm{closest}$. 
Noticing that $x'_\mathrm{closest}$ is one of the output candidates of Algorithm~\ref{alg:general_metric}, 
we can upper bound the objective value of the output $x_\mathrm{out}$ as follows: 
In the case of $p<\infty$, 
\begin{align*}
\left(\sum_{\ell\in [L]}d(x_\mathrm{out},x_\ell)^p\right)^{1/p}
\leq \left(\sum_{\ell\in [L]}d(x'_\mathrm{closest},x_\ell)^p\right)^{1/p}\leq (\alpha+2)\left(\sum_{\ell\in [L]}d(x^*,x_\ell)^p\right)^{1/p},
\end{align*}
while in the case of $p=\infty$, 
\begin{align*}
\max_{\ell\in [L]}d(x_\mathrm{out},x_\ell)
\leq \max_{\ell\in [L]}d(x'_\mathrm{closest},x_\ell)\leq (\alpha+2)\max_{\ell\in [L]}d(x^*,x_\ell),
\end{align*}
which concludes the proof. 
\end{proof}

\subsection{Proof of Corollary~\ref{cor:prob}}
\begin{proof}
(i) By Lemma~\ref{lem:reduction}, 
it suffices to show that there exists a polynomial-time $4.5$-approximation algorithm for Problem~\ref{prob:general_metric} 
with the metric space $(X,d)$ and the part of input $F\subseteq X$ that correspond to Problem~\ref{prob:general} with the probability constraint. 
By Theorem~\ref{thm:probability}, Algorithm~\ref{alg:general_metric} is an $(\alpha+2)$-approximation algorithm for Problem~\ref{prob:general_metric}, where $\alpha$ is the approximation ratio of the algorithm employed for solving Problem~\ref{prob:sub} with those $(X,d)$ and $F\subseteq X$. 
Based on the reduction in the proof of Lemma~\ref{lem:reduction}, 
Problem~\ref{prob:sub} with those $(X,d)$ and $F\subseteq X$ is equivalent to \textsc{MinDisagree} with the probability constraint, 
for which there exists a polynomial-time $2.5$-approximation algorithm~\citep{Ailon+08}. Therefore, we have the corollary. 

(ii) The proof strategy is the same as the above. 
In this case, we can specialize the reduction given in the proof of Lemma~\ref{lem:reduction} by replacing $X=[0,1]^E$ with $X=\{0,1\}^E$, 
and we see that Problem~\ref{prob:sub} with $(X,d)$ and $F\subseteq X$ is equivalent to the unweighted \textsc{MinDisagree}, 
for which there exists a polynomial-time $(1.437+\epsilon)$-approximation algorithm for any $\epsilon>0$~\citep{Cao+24}. 

(iii) The proof is again similar. 
In this case, we can specialize the reduction by replacing $X=[0,1]^E$ with $X=\{x\in [0,1]^E: x(u,w)\leq x(u,v)+x(v,w),\ \forall u,v,w\in V\}$, 
and we see that Problem~\ref{prob:sub} with $(X,d)$ and $F\subseteq X$ is equivalent to \textsc{MinDisagree} with the probability constraint and the triangle inequality constraint, 
for which there exists a polynomial-time $1.5$-approximation algorithm~\citep{Chawla+15}. 
\end{proof}

\subsection{Pseudocode of the $4$-Approximation Algorithm for Problem~\ref{prob:general} with Probability Constraint}\label{appx:pseudo_2}

See Algorithm~\ref{alg:threshold}. 

\begin{algorithm2e}[t]
\caption{$4$-approximation for Problem~\ref{prob:general} with the probability constraint}\label{alg:threshold}
\SetKwInput{Input}{Input}
\SetKwInput{Output}{Output}
\Input{$V$ and $(w^+_\ell,w^-_\ell)_{\ell \in [L]}$}
Perform the first two lines in Algorithm~\ref{alg:general}\;
Initialize $\mathcal{B}\leftarrow \emptyset$ and $U\leftarrow V$\;
\While{$U\neq \emptyset$}{
  Take an arbitrary $i\in U$ and initialize $B\leftarrow \{i\}$\;
  $C\leftarrow B_U(i,1/2)\setminus \{i\}$\;
  \textbf{if} $\frac{1}{|C|}\sum_{j\in C}x^*_{ij}<1/4$ \textbf{then} $B\leftarrow B\cup C$\;
  $\mathcal{B}\leftarrow \mathcal{B}\cup \{B\}$ and $U\leftarrow U\setminus B$\;
}
\Return{$\mathcal{B}$}\;
\end{algorithm2e}

\subsection{Proof of Theorem~\ref{thm:probability2}}

\begin{proof}
It suffices to prove that for any layer $\ell \in [L]$, it holds that
\begin{align}\label{ineq:key_prob}
\Disagree_\ell(\mathcal{B}) \leq 4 \sum_{\{i,j\}\in E}\left(w^+_\ell(i,j)x^*_{ij}+w^-_\ell(i,j)(1-x^*_{ij})\right).
\end{align}
Indeed, from this inequality, it follows that 
$\|\mathbf{Disagree}(\mathcal{B})\|_p \leq 4\cdot \mathrm{OPT}_\textrm{CV}$ if $p<\infty$ 
and $\|\mathbf{Disagree}(\mathcal{B})\|_p \leq 4\cdot \mathrm{OPT}_\textrm{LP}$ if $p=\infty$, 
which proves the theorem.
Fix $\ell \in [L]$ and consider an arbitrary iteration of the while-loop in Algorithm~\ref{alg:threshold}. 
Let $B\subseteq U$ be the cluster produced in the iteration. 
We define the \emph{cost} of $B$ as the contribution of all pairs of elements in $U$ 
with at least one of them being inside $B$ to the objective value, 
i.e., $\sum_{\{j,k\}\in E:\,j,k\in B}w^-_\ell(j,k)+\sum_{\{j,k\}\in E:\,j\in B\land k\in U\setminus B}w^+_\ell(j,k)$. 
In what follows, we upper bound the cost of $B$ using the corresponding terms in the right-hand-side of Inequality~\eqref{ineq:key_prob}. 
Recall that $C=B_U(i,1/2)\setminus \{i\}$ contains all elements in $U$ (except for $i$) 
within distance of at most $1/2$ from $i$. There are two cases: 

(i) If the average distance between $i$ and the elements in $C$ is no less than $1/4$, 
i.e., $\frac{1}{|C|}\sum_{j \in C} x^*_{ij}\geq 1/4$, then the algorithm forms the singleton cluster $B=\{i\}$.
In this case, the cost of the cluster reduces to $\sum_{j \in U \setminus \{i\}} w^+_\ell(i,j)$. 
For each $j\in U\setminus \{i\}$ with $x^*_{ij}>1/2$, 
we can upper bound each $w^+_\ell(i,j)$ using the corresponding term in the right-hand-side of Inequality~\eqref{ineq:key_prob}
because it holds that $w^+_\ell(i,j) \leq 2\cdot w^+_\ell(i,j)x^*_{ij} \leq 2\left(w^+_\ell(i,j)x^*_{ij}+w^-_\ell(i,j)(1-x^*_{ij})\right)$.
On the other hand, consider any pair of elements for which $x^*_{ij} \leq 1/2$ holds, i.e., the element $j$ is contained in $C$. 
Then, it holds that $1-x^*_{ij} \geq x^*_{ij}$, and thus we have 
\begin{align*}
& \sum_{j \in C} \left(w^+_\ell(i,j)x^*_{ij}+w^-_\ell(i,j)(1-x^*_{ij})\right) 
\geq \sum_{j \in C} \left(w^+_\ell(i,j)+w^-_\ell(i,j)\right)x^*_{ij} 
= \sum_{j \in C} x^*_{ij}, 
\end{align*}
where the equality follows from the probability constraint. 
Using the above inequality together with the assumption $\frac{1}{|C|}\sum_{j \in C} x^*_{ij}\geq 1/4$, we have 
\begin{align*}
\sum_{j \in C} w^+_\ell(i,j) \leq |C| 
\leq 4 \sum_{j \in C} x^*_{ij}
\leq 4 \sum_{j \in C} \left(w^+_\ell(i,j)x^*_{ij}+w^-_\ell(i,j)(1-x^*_{ij})\right). 
\end{align*}

(ii) The second case is when the average satisfies $\frac{1}{|C|}\sum_{j \in C} x^*_{ij} < 1/4$, 
where the algorithm forms the cluster $B = \{i\}\cup C$.
For the sake of the proof, we assume that the elements in $U$ are relabeled 
so that $j<k$ if $x^*_{ij} < x^*_{ik}$, where ties are broken arbitrarily.

First consider the pairs of elements contained in $B$. 
The cost of $B$ charged by these pairs is $\sum_{\{j,k\}\in E:\,j,k\in B} w^-_\ell(j,k)$. 
If both $x^*_{ij}< 3/8$ and $x^*_{ik}<3/8$ hold, 
then the triangle inequality over the pseudometric assures that $1-x^*_{jk} \geq 1/4$, 
and therefore each $w^-_\ell(j,k)$ can be upper bounded by the corresponding term 
in the right-hand-side of Inequality~\eqref{ineq:key_prob} within a factor of $4$.
The cost of $B$ charged by the remaining pairs of elements $j,k\in B$ with $j<k$ can be taken into account by $k$. 
Obviously we have $x^*_{ik} \in [3/8,1/2]$. 
For a fixed $k$, define the quantities $p_k = \sum_{j<k} w^+_\ell(j,k)$ and $n_k = \sum_{j<k} w^-_\ell(j,k)$.
The cost taken into account by $k$ is equal to $n_k$.
The sum of the terms corresponding to all pairs $j<k$, where $k$ is fixed, in the right-hand-side of Inequality~\eqref{ineq:key_prob} 
can be lower bounded as follows: 
\begin{align*}
\sum_{j<k} \left(w^+_\ell(j,k)x^*_{jk}+w^-_\ell(j,k)(1-x^*_{jk})\right) 
&\geq \sum_{j<k} \left(w^+_\ell(j,k)(x^*_{ik}-x^*_{ij})+w^-_\ell(j,k)(1-x^*_{ik}-x^*_{ij})\right) \\
&= p_k x^*_{ik}+n_k(1-x^*_{ik}) -\sum_{j<k}x^*_{ij} \\
&\geq p_k x^*_{ik}+n_k(1-x^*_{ik})-\frac{p_k+n_k}{4}. 
\end{align*}
The last inequality follows from the probability constraint 
together with the fact that the average distance between $i$ and the elements in $\{j: j<k\}$ must be smaller than $1/4$, 
as $x^*_{ij} \geq 3/8$ for any $j\geq k$. 
Therefore, the above is lower bounded by a linear function depending on $x^*_{ik} \in [3/8,1/2]$. 
It is easy to see that for every $x^*_{ik}$ in this range, the value is always at least $n_k/4$. 
Therefore, the cost $n_k$ is always within a factor of $4$. 

Next consider the pairs of elements $j,k\in U$ with exactly one element being contained in $B=\{i\}\cup C$. 
Without loss of generality, we assume that $j<k$ and thus we have $j\in B$, $k\in U\setminus B$, and $x^*_{ij}<x^*_{ik}$. 
The cost of $B$ charged by these pairs is $\sum_{\{j,k\}\in E:\,j\in B\land k\in U\setminus B} w^+_\ell(j,k)$. 
If $x^*_{ik}\geq 3/4$ holds, then $x^*_{ik}-x^*_{ij}\geq 1/4$. 
Using the triangle inequality over the pseudometric, we have $x^*_{jk}\geq 1/4$, 
meaning that the cost charged by those pairs is accounted for within a factor of $4$. 
The cost of $B$ charged by the remaining pairs can again be taken into account by $k$. 
Obviously we have $x^*_{ik} \in (1/2,3/4)$. 
For a fixed $k$, redefine the quantities $p_k = \sum_{j<k:\,j\in B} w^+_\ell(j,k)$ and $n_k = \sum_{j<k:\,j\in B} w^-_\ell(j,k)$. 
The cost taken into account by $k$ is equal to $p_k$. 
The rest of the proof is identical to the above. 
\end{proof}

\section{OMITTED CONTENTS IN SECTION~\ref{sec:exp}}\label{appdx:exp}

\subsection{Experimental Setup for Problem~\ref{prob:general} with the Probability Constraint}\label{appx:exp_setup}

\textbf{Datasets.}
The instances are generated with the same intuition as that for Problem~\ref{prob:general} in the general weighted case. 
For each layer $\ell\in [L]$ and $\{u,v\}\in E$,
if $\{u,v\}\in E_\ell$ we set $w^+_\ell(u,v)=0.5+w_\ell(\{u,v\})/2$ and $w^-_\ell(u,v)=1-w^+_\ell(u,v)$;
otherwise we set $w^+_\ell(u,v)=1-w^-_\ell(u,v)$, where $w^-_\ell(u,v)=0.5+\texttt{random.choice}(\texttt{weights}(\ell))/2$ with probability $0.5$,
and $w^+_\ell(u,v)=w^-_\ell(u,v)=0.5$ otherwise.

\smallskip
\noindent
\textbf{Our algorithms and baselines.}
We run Algorithms~\ref{alg:general_metric} and~\ref{alg:threshold}.
Note that Algorithm~\ref{alg:general_metric} varies depending on the approximation algorithm for \textsc{MinDisagree} with the probability constraint employed in the algorithm.
Specifically, we use the $2.5$-approximation algorithm and the $5$-approximation algorithm, designed by \citet{Ailon+08},
providing the approximation ratios of $4.5$ and $7$, respectively, of Algorithm~\ref{alg:general_metric}.
There is a trade-off between these two selections:
The first algorithm has a better approximation ratio, but it is slower, as it has to solve an LP, which is not required in the second algorithm.
We refer to the two algorithms as Algorithm~\ref{alg:general_metric} (LP) and Algorithm~\ref{alg:general_metric} ($\overline{\text{LP}}$), respectively.
In Algorithm~\ref{alg:threshold}, the way to select a pivot is arbitrary, and we use the same rule as that for Algorithm~\ref{alg:general}. 
We employ the following baseline method, which we refer to as \textsf{Aggregate-Pr}. This method is the probability-constraint counterpart of \textsf{Aggregate}.
Specifically, the method constructs $w^+\colon E\rightarrow \mathbb{R}_{\geq 0}$ and $w^-\colon E\rightarrow \mathbb{R}_{\geq 0}$ by setting
$w^+(u,v)=\left(\sum_{\ell\in [L]}w^+_\ell(u,v)\right)/L$ and $w^-(u,v)=\left(\sum_{\ell\in [L]}w^-_\ell(u,v)\right)/L$ for every $\{u,v\}\in E$.
Then it solves \textsc{MinDisagree} with the probability constraint with input $V$ and $(w^+,w^-)$,
using the $2.5$-approximation algorithm or the $5$-approximation algorithm~\citep{Ailon+08}, as in Algorithm~\ref{alg:general_metric}.
We refer to this baseline as \textsf{Aggregate-Pr} (LP) or \textsf{Aggregate-Pr} ($\overline{\text{LP}}$), depending on the choice of the above approximation algorithm.
As mentioned in Section~\ref{sec:problem},
\textsf{Aggregate-Pr} (LP) gives a $2.5$-approximate solution for Problem~\ref{prob:general} with the probability constraint when $p=1$,
but the approximation ratio for the case of $p=\infty$ is not clear.

\subsection{Results for Problem~\ref{prob:general} with the Probability Constraint}\label{appx:exp_result}

The results are summarized in Tables~\ref{tab:res_prob} and~\ref{tab:res_prob_2} (just separated due to space constraints).
Note that for this case, all algorithms except for Algorithm~\ref{alg:threshold} are performed 10 times, as they contain randomness.
OT again indicates that (the first run of) the algorithm did not terminate in 3,600 seconds.
The objective values are presented using the average value and the standard deviation,
while the running time is just with the average value, because obviously it may not vary much.
The trend of the results is similar to that for the general weighted case.
Indeed, Algorithm~\ref{alg:threshold} with an approximation ratio of $4$ outperforms the baseline methods in terms of solution quality,
and the algorithm succeeds in obtaining near-optimal solutions.
Although Algorithm~\ref{alg:general_metric} (LP) and Algorithm~\ref{alg:general_metric} ($\overline{\text{LP}}$) are also our proposed algorithms,
which have approximation ratios of $4.5$ and $7$, respectively, their practical performances are not comparable with that of Algorithm~\ref{alg:threshold}.
Therefore, we conclude that our proposed algorithm for practical use is Algorithm~\ref{alg:threshold}.

\begin{table}
\centering
\caption{Results for Problem~\ref{prob:general} with the probability constraint.}\label{tab:res_prob}
\smallskip
\scalebox{0.89}{
\begin{tabular}{lrrrrrrrrrrrrrrrrr}
\toprule
&&&\multicolumn{2}{c}{Algorithm~\ref{alg:general_metric} (LP)} & &\multicolumn{2}{c}{Algorithm~\ref{alg:general_metric} ($\overline{\text{LP}}$)} & &\multicolumn{2}{c}{Algorithm~\ref{alg:threshold}} \\ 
\cline{4-5} \cline{7-8} \cline{10-11} 
Dataset  &LB &&Obj. val. &Time(s)  &&Obj. val. &Time(s) &&Obj. val. &Time(s) \\ 
\midrule
\texttt{aves-sparrow-social} &630.8  &&635.1$\pm$1.8 &0.4  &&658.0$\pm$1.6 &\textbf{0.0}  &&\textbf{631.1} &0.3  \\ 
\texttt{insecta-ant-colony1}       &3148.2 &&3154.5$\pm$0.5 &1728.4  &&3160.7$\pm$1.2 &1.0  &&\textbf{3150.3} &674.2  \\ 
\texttt{reptilia-tortoise-bsv} &2387.5 &&2683.3$\pm$40.6 &19.8  &&3837.7$\pm$54.6 &\textbf{0.0}  &&\textbf{2422.5} &2.9 \\ 
\texttt{aves-wildbird-network} &9840.2 &&9887.9$\pm$2.6 &142.0 &&10077.8$\pm$8.4 &0.1  &&\textbf{9841.3} &11.2  \\ 
\texttt{aves-weaver-social} &24875.7 &&--- &OT  &&39732.3$\pm$342.5 &5.3  &&\textbf{24924.5} &94.1  \\ 
\texttt{reptilia-tortoise-fi} &77569.5 &&--- &OT &&126849.1$\pm$831.5 &3.2  &&\textbf{77577.5} &189.5  \\ 
\bottomrule
\end{tabular}
}
\end{table}

\begin{table}
\centering
\caption{Results for Problem~\ref{prob:general} with the probability constraint (continued).}\label{tab:res_prob_2}
\smallskip
\scalebox{0.89}{
\begin{tabular}{lrrrrrrrrrrrrrrrrr}
\toprule
&&&\multicolumn{2}{c}{\textsf{Aggregate-Pr} (LP)} & &\multicolumn{2}{c}{\textsf{Aggregate-Pr} ($\overline{\text{LP}}$)}\\
\cline{4-5} \cline{7-8}
Dataset  &LB &&Obj. val. &Time(s) &&Obj. val. &Time(s)\\
\midrule
\texttt{aves-sparrow-social} &630.8  &&638.1$\pm$1.7 &0.1  &&652.7$\pm$2.1 &\textbf{0.0} \\
\texttt{insecta-ant-colony1}   &3148.2 &&3154.0$\pm$0.1 &60.3  &&3158.3$\pm$3.4  &\textbf{0.0} \\
\texttt{reptilia-tortoise-bsv} &2387.5 &&2444.5$\pm$13.4 &0.9  &&2601.0$\pm$18.2 &\textbf{0.0} \\
\texttt{aves-wildbird-network} &9840.2 &&9863.2$\pm$4.7 &6.2  &&9900.9$\pm$17.4 &\textbf{0.0} \\
\texttt{aves-weaver-social} &24875.7 &&24971.5$\pm$0.0 &10.3  &&24971.0$\pm$0.0 &\textbf{0.2} \\
\texttt{reptilia-tortoise-fi} &77569.5 &&77664.7$\pm$5.1 &123.5  &&77740.8$\pm$12.4 &\textbf{0.2} \\
\bottomrule
\end{tabular}
}
\end{table}

\section{OMITTED CONTENTS IN SECTION~\ref{sec:conclusion}}\label{appdx:conclusion}

\subsection{Detailed Descriptions of Open Problems}\label{appdx:conclusion_sub}
For Problem~\ref{prob:general} in the general weighted case,
can we design a polynomial-time algorithm that has an approximation ratio better than $O(L\log n)$?
As Problem~\ref{prob:general} contains \textsc{MinDisagree} as a special case
and approximating \textsc{MinDisagree} is known to be harder than approximating Minimum Multicut~\citep{Garg+96},
it is quite challenging to obtain an approximation ratio of $o(\log n)$.
Therefore, a more reasonable question is ``to what extent can we reduce the term $L$ in the current approximation ratio of $O(L\log n)$?''
To answer this, the first step would be to investigate the integrality gaps of (CV) and (LP).
The current integrality gap of $\Omega(\log n)$,
inherited from the LP relaxation used in the $O(\log n)$-approximation algorithms for \textsc{MinDisagree}~\citep{Charikar+05,Demaine+06},
leaves the possibility to improve the approximation ratio of Algorithm~\ref{alg:general} to $O(\log n)$.
Another interesting direction is to improve the approximation ratios for Problem~\ref{prob:general} with the probability constraint and its special cases.
For instance, can we design a polynomial-time algorithm that has an approximation ratio better than $4$ for the general case?
To this end, one possibility is to improve the approximation ratio for \textsc{MinDisagree} with the probability constraint from the current best $2.5$~\citep{Ailon+08}
to some value smaller than $2$.
As the integrality gap of the LP relaxation used in the $2.5$-approximation algorithm (i.e., \textsc{KwikCluster}) is known to be $2$~\citep{Charikar+05},
this approach requires to invent a different technique.
Another possibility is to replace the rounding procedure of Algorithm~\ref{alg:threshold} to that of \textsc{KwikCluster},
but it is not clear how to extend the analysis focusing on the \emph{bad triplets}~\citep{Ailon+08} to the multilayer setting.
For Problem~\ref{prob:general} in the unweighted case and Problem~\ref{prob:general} with the probability constraint and the triangle inequality constraint, improving the approximation ratio for the single-layer counterpart directly improves our approximation ratios.
Finally, investigating Multilayer Correlation Clustering in the spirit of \textsc{MaxAgree} rather than \textsc{MinDisagree} is also an interesting direction.
It is worth mentioning that a closely-related problem called Simultaneous Max-Cut has recently been studied by \citet{Bhangale+18} and \citet{Bhangale+20} from the approximability and inapproximability points of view, respectively.

\end{document}